\documentclass[
reprint,
amsmath,amssymb,
aps,
pra,
]{revtex4-2}

\usepackage{graphicx}
\usepackage{amsmath,tikz,pgf,pgfplots,inputenc}
\usepackage{dcolumn}
\usepackage{subfigure}
\usepackage{bm}
\usepackage{hyperref}
\usepackage{color,soul}
\usepackage{xcolor}
\definecolor{blue1}{RGB}{0, 0, 255}
\usepackage{bbold}
\usepackage{appendix}
\usepackage{amsthm}

\usepackage{hyperref}
\usepackage{enumerate}

\newtheorem{theorem}{Theorem}
\newtheorem{lemma}[theorem]{Lemma}
\newtheorem{proposition}{Proposition}
\newtheorem{corollary}{Corollary}

\newenvironment{definition}[1][Definition]{\begin{trivlist}
\item[\hskip \labelsep {\bfseries #1}]}{\end{trivlist}}
\newenvironment{example}[1][Example.]{\begin{trivlist}
\item[\hskip \labelsep {\bfseries #1}]}{\end{trivlist}}

\newcommand{\ket}[1]{|#1\rangle}
\newcommand{\bra}[1]{\langle #1|}

\newcommand{\Tr}{{\mathrm {Tr}}}

\newcommand{\etal}{\textit {et al. }}


\begin{document}

\renewcommand*{\bibname}{Refrences}


\title{Geometric Bloch Vector Solution to Minimum\textcolor{blue}{-}Error Discriminations of Mixed Qubit States}

\author{Mahdi Rouhbakhsh N.$^{1,2}$}

\author{and Seyed Arash Ghoreishi$^{3,4}$}
\email{arash.ghoreishi@savba.sk}

\affiliation{$^1$Sharif University of Technology, Department of Physics, 14588 Tehran, Iran}

\affiliation{$^2$Eberhard-Karls-Universität Tübingen, Institut für Theoretische Physik, 72076 Tübingen, Germany}

\affiliation{$^3$RCQI, Institute of Physics, Slovak Academy of Sciences, Dúbravská Cesta 9, 84511 Bratislava, Slovakia}

\affiliation{$^4$Faculty of Informatics, Masaryk University, Botanická 68a, 602 00 Brno, Czech Republic}

\begin{abstract}
We investigate minimum-error (ME) discrimination for mixed qubit states using a geometric approach. By analyzing positive operator-valued measure (POVM) solutions and introducing Lagrange operator $\Gamma$, we develop a four-step structured instruction to find $\Gamma$ for $N$ mixed qubit states. Our method covers optimal solutions for two, three, and four mixed qubit states, including a novel result for four qubit states. We introduce geometric-based POVM classes and non-decomposable subsets for constructing optimal solutions, enabling us to find all possible answers for the general problem of minimum-error discrimination for $N$ mixed qubit states with arbitrary a priori probabilities.
\end{abstract}


\maketitle

\section{Introduction} \label{s1}

Distinguishing between quantum objects is one of the most fundamental tasks in information theory. Because of its particular importance in quantum communication~\cite{HolevoBook1982,HelstromBook1976} and quantum cryptography~\cite{GisinRMP2002}, discriminating among these objects plays a crucial role in quantum information protocols.

The quantum state discrimination problem has been the subject of many research during recent decades~\cite{BaeJPA2015}. The starting point of these efforts dates back to 1978 when Helstrom addressed this issue in his famous book~\cite{HelstromBook1976}. He studied the two-state case and obtained the minimum-error probability in the framework of quantum detection and estimation theory~\cite{BergouLNP2004}. Since then, based on the different approaches on the issue (error-free protocols with a possibility of failure and protocols with minimum-error probability) many developments have been achieved ~\cite{IvanovicPLA1987,PeresPLA1988,DieksPLA1988,JaegerPLA1995,CheflesPLA1998,CheflesBarnettPLA1998,Eldar2004,CrokePRL2006,HayashiPRA2008, SugimotoPRA2009,GhoreishiQIP2019,GhoreishiPRA2021,HaQIP2022,LoubenetsPRA2022}. Determining which one is the most suitable is highly dependent on the purpose of the process.

The extension to the general case of $N$ possible states $\{\rho_i\}_{i=1}^N$ with associated a priori probabilities $\{p_i\}_{i=1}^N$ is not a straightforward task. However, for these problems, there are necessary and sufficient conditions on the optimal measurement operators $\pi_i$ known as the Helstrom conditions~\cite{HelstromBook1976,BarnettAOP2009,BarnettJPA2009}
\begin{eqnarray} \label{HC1}
&\Gamma& -\; \: p_i \rho_i \geq \mathbb{0}    \: \:\:\:\: \:\:\:\:\:\:  \forall{i} ,  \\ \label{HC2}
&\pi_j& (p_j \rho_j - p_i \rho_i)\;\:\pi_i=\mathbb{0}   \:  \:  \: \forall{i,j}  ,
\end{eqnarray}
Here, $\Gamma=\sum_{i} p_i \rho_i \pi_i$ is a positive Hermitian operator known as the Lagrange operator. To prove sufficiency, let us consider ${\pi_j}$ as the optimal measurement operators. For any other POVMs ${\pi'_j}$, we have $\sum_{j} p_j \Tr(\rho_j \pi_j) \geq \sum_{i} p_i \Tr(\rho_i \pi'_i)$. By using the resolution of the identity for ${\pi'_j}$, i.e., $\sum_j\pi_j^{\prime}=\mathbb{1}$, we obtain $\sum_j \Tr\left(\left(\sum_{i} p_i \rho_i \pi_i-p_j\rho_j\right)\pi'_j\right) \geq 0$. Since the $\pi'_j$'s are positive elements, we can derive Eq. \eqref{HC1} from this inequality. The second condition, Eq. \eqref{HC2}, is not independent but rather can be derived from the first condition because $\Gamma$ is a positive and therefore Hermitian operator \cite{BarnettAOP2009}. The proof of necessity is straightforward \cite{BarnettJPA2009}. So, the main condition, which is both a necessary and sufficient condition for an optimal measurement, is Eq.~\eqref{HC1}. However, we will still use Eq.~\eqref{HC2} in our procedure. It is worth emphasizing that all minimum-error measurements that give the optimal probability define a unique Lagrange operator \cite{BarnettBook2009}.

The Helstrom conditions can be used to verify the optimality of a candidate measurement. However, they cannot be directly employed to construct the optimal measurement for a general problem. They can be helpful for problems with certain symmetries among states, providing guidance in guessing the appropriate form of measurement in such problems~\cite{BanIJTP1997,BarnettPRA2001,ChouPRA2003,AnderssonPRA2002,ChouPRA2004,MochonPRA2006}.

Although, for a while, only problems with certain symmetries seemed solvable, in recent years there have been successful attempts in solving problems by employing the conditions~\eqref{HC1} and~\eqref{HC2}. Particularly, for qubit states, the Bloch representation provides a useful tool to solve the problem of ME discrimination of qubit states. For example, Hunter in 2004 provided a complete solution for pure qubit states with equal a priori probabilities \cite{HunterAIP2004}. Samsonov in 2009 employed the necessary and sufficient conditions to propose an algorithmic solution for $N$ pure qubit states~\cite{SamsonovPRA2009}. Deconinck and Terhal in 2010 used the discrimination duality theorem and the Bloch sphere representation to provide a geometric and analytic representation of the optimal guessing strategies for qubit states~\cite{DeconinckPRA2010}. Bae in 2013 proposed a geometric formulation for a case with equal priori probabilities in a situation where quantum state geometry is clear~\cite{BaeNJP2013,BaePRA2013}, and in the same year, a complete analysis for three mixed states of qubit systems was done by Ha \etal~\cite{HaPRA2013}. Weir \etal in 2017 utilized the Helstrom conditions constructively and analytically to solve a problem with any number of qubit states with arbitrary priori probabilities~\cite{WeirPRA2017}, giving the central role in their approach to the inverse of $\Gamma$ instead of $\Gamma$ itself. Later, they applied their method to find optimal strategies for trine states~\cite{WeirQST2018}.

In this paper, we consider the problem of ME discrimination of $N$ mixed qubit states with arbitrary a priori probabilities using a different geometric approach. 
Our starting point is the Helstrom conditions. By employing both these conditions and the Bloch representation for qubit states, one can find all possible ME POVM answers for a given problem by knowing only the Lagrange operator $\Gamma$. We provide a comprehensive four-step structured instruction for obtaining $\Gamma$ and solving the problem for arbitrary prior probabilities.
We will show that $\Gamma$ can be found by considering a maximum of four qubits of the problem. Therefore, our instruction can be used to find $\Gamma$, for a general minimum-error problem, with one, two, three, and four qubits in the corresponding steps. A novel and direct solution to the last case of four-qubit states is presented. Equipped with these tools, we can find all possible ME discrimination measurements for the problem. Moreover, by introducing non-decomposable POVM answers, an alternative approach to achieve a general optimal answer is possible by considering a convex combination of these non-decomposable POVM sets.

The paper is organized as follows: In Sect.~\ref{section:Quantum state discrimination}, we provide an overview of the problem of quantum state discrimination using a general approach. We define the Primal and Dual problem formulations. In Sect. ~\ref{subsection:Discrimination of qubit states}, we adopt the Bloch vector representation and present a systematic method to determine all optimal solutions to the problem using the Lagrange operator $\Gamma$. We construct a four-step instruction that guides this process effectively.Next, in Sect.~\ref{section:Solution in qubit state discrimination}, we solve a general problem of mixed qubit states with arbitrary priori probabilities for two, three, and four qubits cases. We begin Sec~\ref{Ncases} by introducing an important theorem for the problem of $N$ qubit states discrimination for $N>4$ Then, in Sect.~\ref{section:examples}, we consider cases with $N>4$, such as $N=5$ and $N=6$ states, as illustrative examples. In this analysis, we identify regions by the simplest non-decomposable solutions, which are represented by different colors. We also revisit the special case of $N$ states with equal a priori probabilities. The last section includes a summary and conclusions of the paper.\\

\section{Quantum state discrimination: Primal vs Dual problem} \label{section:Quantum state discrimination}

Suppose that we are given $N$ states $\rho_i$, $i=1, \cdots, N$, with a priori probabilities $p_i$. Our task is to discriminate among these states by performing the optimal measurement with minimum probability of error. Such measurement needs to have $N$ outcomes corresponding, respectively, to each of the  $N$ states $\rho_i$. A general measurement can be described by a positive operator-valued measure (POVM)~\cite{PeresBook1993,NielsenBook2010,TeikoBook2011}, which is defined by a set of operators $\{ \pi_j \}$ satisfying
\begin{eqnarray} \label{POVMCompleteness}
\pi_j \geq \mathbb{0}, \qquad
\sum_{j}\pi_{j}=\mathbb{1}.
\end{eqnarray}
Each of the possible measurement outcomes $j$ is characterized by the corresponding POVM element $\pi_j$. The probability of observing outcome ``$j$" when the state of system is ``$i$" is
\begin{equation}
P(j|i)=\Tr(\pi_j\rho_i).
\end{equation}
Then, the probability of making a correct guess when the given set is $\{p_i,\rho_i\}_{i=1}^N$, will be
\begin{equation}
P_{\textrm{corr}}=\sum_{i=1}^N p_i\Tr(\rho_i\pi_i).
\end{equation}
In the minimum-error discrimination approach, the goal is to find the optimal measurement that minimizes the average probability of error, or equivalently, maximizes the probability of making a correct guess denoted by $P_{\textrm{guess}}=\max \{P_{\textrm{corr}}\}$. This can be achievable, as we mentioned previously, if and only if the POVM elements satisfy the Helstrom conditions~\eqref{HC1} and~\eqref{HC2}. This problem is called \emph{primal} problem. Equivalently, the \emph{dual} problem can be defined using semidefinite programming as follows 
\begin{equation}
    \min \Tr{[K]},\qquad \text{subject to}  \qquad K-p_i\rho_i\geq 0, 
\end{equation}
the constraints on the dual problem can be written as 
\begin{equation}
    K=p_ir_i+r_iw_i,
\end{equation}
where in this equation, we introduce the variable $K$ as a positive semidefinite matrix and non-negative variables $r_i$, where $i=1, \cdots, N$. Additionally, $w_i$ represents density matrices. \\
It is of importance that in general these two approaches may not coincide. The condition under the guessing probability can be obtained from primal and dual optimizations following the property called strong duality. The conditions under which the dual problem coincides with the primal one are Karush-Kuhn-Tucker (KKT) \cite{BoydBook2004}
conditions and can be written as 
\begin{eqnarray}
    r_iw_i-r_jw_j&=&p_j\rho_j-p_i\rho_i, \\
    r_i\Tr[w_i\pi_i]&=&0. \nonumber
\end{eqnarray}
So far, most of the works employ the duality problem to tackle the problem of minimum-error discrimination of qubit states \cite{DeconinckPRA2010,BaeNJP2013,BaePRA2013,HaPRA2013}. However, in this paper, we will start directly from the primal problem and use the Helstrom conditions~\eqref{HC1} and~\eqref{HC2} in a constructive way for solving the problem.

To continue, let us reformulate conditions~\eqref{HC1} and~\eqref{HC2} by summing the second one over $j$ which results in
\begin{eqnarray}
\Gamma - \tilde{\rho}_i\geq \mathbb{0}  \enspace\enspace\enspace\enspace\enspace \forall{i} ,\label{gamma}
\\
(\Gamma - \tilde{\rho}_i)\pi_i=\mathbb{0}  \enspace\enspace\enspace \forall{i} ,\label{pi0}
\end{eqnarray}
where we have defined  $\tilde{\rho}_i=p_i\rho_i$.

From Eqs. ~\eqref{gamma} and \eqref{pi0} (and its Hermitian conjugate $\pi_i(\Gamma - \tilde{\rho}_i)=\mathbb{0}  \enspace\enspace \forall{i}$), in the case where $\Gamma - \tilde{\rho}_i$ is not a full rank operator, it can be easily seen that $\textrm{supp}\{\Gamma - \tilde{\rho}_i\}\perp \textrm{supp}\{\pi_i\}$, or equivalently, $\textrm{supp}\{\Gamma - \tilde{\rho}_i\}\subseteq  \textrm{ker}\{\pi_i\}$, where $\textrm{supp}\{\cdot\}$ and $\textrm{ker}\{\cdot\}$ denote the support and kernel of an operator, respectively. In other words, Eqs ~\eqref{gamma} and \eqref{pi0} imply that both determinants $\det\{\Gamma - \tilde{\rho}_i\}$ and $\det\{\pi_i\}$ have to be null.
\\
Thus far, the discussion has remained general without any specific reference to qubit states. From this point onward, we will focus on qubit states for the rest of this paper.

\section{Minimum-error discrimination of qubit states} \label{subsection:Discrimination of qubit states}

Solving a general problem of minimum-error (ME) discrimination is a challenging task, and there is no analytical solution in general. However, there have been some approaches proposed for qubit states in the literature~\cite{HunterAIP2004,SamsonovPRA2009,DeconinckPRA2010,BaeNJP2013,HaPRA2013,WeirPRA2017}. In this work, we consider the problem of the most general case of qubit states using a geometric approach. We utilize the Bloch vector representation to express $\rho_i$ and  $\Gamma$ as follows
\begin{eqnarray}
\rho_i &=&\frac{1}{2}(\mathbb{1}+\boldsymbol{v}_i \cdot \boldsymbol{\sigma}), \label{eq:defrho} \\
\Gamma &=&\frac{1}{2}(\gamma_0\mathbb{1}+\boldsymbol{\gamma} \cdot \boldsymbol{\sigma}). \label{eq:defGamma}
\end{eqnarray}
Here, $\boldsymbol{v}_i=(v_{xi},v_{yi},v_{zi})$ represents the Bloch vector corresponding to the state $\rho_i$, $\boldsymbol{\sigma}=(\sigma_x, \sigma_y, \sigma_z)$ is a vector constructed by Pauli matrices. Additionally, $\gamma_0=\Tr(\Gamma)=P_{\textrm{guess}}$ and  $\boldsymbol{\gamma}=(\gamma_x, \gamma_y, \gamma_z)$.
Non-negativity of $\rho_i$ implies that $0\le |\boldsymbol{v}_i|\le 1$. The lower bound is achieved for maximally mixed states, while the upper bound corresponds to pure states. Therefore, $|\boldsymbol{v}_i|$ can be considered as a measure of purity.\\
By using Eqs.~\eqref{eq:defrho} and \eqref{eq:defGamma}, we can write
\begin{equation} \label{eq:Gamma-rho}
\Gamma - \tilde{\rho}_i=\frac{1}{2}[(\gamma_0-p_i)\mathbb{1}+(\boldsymbol{\gamma}-\boldsymbol{\tilde{v}_i})\cdot\boldsymbol{\sigma}],
\end{equation}
where we have defined the sub-normalized Bloch vector $\boldsymbol{\tilde{v}_i}=p_i\boldsymbol{v}_i$ associated with the state $\rho_i$ and its priori probability $p_i$. However, for simplicity, both vectors $\boldsymbol{v}_i$ and $\boldsymbol{\tilde{v}_i}$ are referred to as Bloch vectors in this paper. It follows from inequality~\eqref{gamma} that
\begin{equation} \label{eq:gamma0-pgeq}
\gamma_0-p_i\geq|\boldsymbol{\tilde{v}_i}-\boldsymbol{\gamma}|.
\end{equation}
This equation is equivalent to the main Helstrom condition, Eq.~\eqref{HC1}. From now on, we will check the optimality of our solutions by Eq.~\eqref{eq:gamma0-pgeq}.\\ Moving forward, if the inequality in Eq.~\eqref{eq:gamma0-pgeq} holds strictly, indicating positive definiteness of the determinant of $\Gamma-\tilde{\rho}_i$, it implies that the corresponding measurement operator $\pi_i$ must be null. However, if the equality holds, indicating a zero determinant of $\Gamma-\tilde{\rho}_i$, it does not necessarily mean that $\pi_i$ has to be null. Therefore, in the case of a nontrivial answer, we obtain the following useful formula

\begin{equation} \label{eq:gamma0-p}
\gamma_0-p_i=|\boldsymbol{\tilde{v}_i}-\boldsymbol{\gamma}|.
\end{equation}
Regarding $\gamma_0=P_{\textrm{guess}}$, this relation indicates that the distance between two vectors $\boldsymbol{\gamma}$ and $\boldsymbol{\tilde{v}_i}$ is equal to the difference between two probabilities: the guessing probability and the prior probability $p_i$ associated with state $\rho_i$. The greater the value of $p_i$, the closer the vectors $\boldsymbol{\tilde{v}_i}$ and $\boldsymbol{\gamma}$ are to each other. Keeping in mind that Eq.~\eqref{eq:gamma0-p} does not hold for all states in general. For simplicity, and without loss of generality, we arrange the set of states $\{\rho_i\}_{i=1}^N$ and their corresponding priori probabilities $\{p_i\}_{i=1}^N$ in such a 
way that for the first $M$ states $\{\rho_i\}_{i=1}^M$, the operator $(\Gamma - \tilde{\rho}_i)$ has a rank of less than 2. This implies that these states are detectable in the optimal case. On the other hand, the remaining states $\{\rho_i\}_{i=M+1}^N$ have  $(\Gamma - \tilde{\rho}_i)$ as a full rank operator, meaning that there is no way to detect them optimally. So, for $i=M+1,\cdots,N$, we have $\pi_i=\mathbb{0}$.
With this terminology, and because $\det(\pi_i)=0$ when $\Gamma - \tilde{\rho}_i\ne \mathbb{0}$ or when it is a full rank operator, it turns out that the measurement operators $\pi_i$'s for $i=1,\cdots,M$ must be rank one and proportional to projectors. As a qubit projector is characterized by its unit vector $\boldsymbol{\hat{n}_i}$ on the Bloch sphere, we can write $\pi_i$ as follows
\begin{equation} \label{eq:defpi}
\pi_i=\frac{\alpha_i}{2}(\mathbb{1}+\boldsymbol{\hat{n}_i} \cdot \boldsymbol{\sigma}),
\end{equation}
for $i=1,\cdots,M$, where $\alpha_i$'s are real numbers ranging from zero to one, and $\alpha_i=\Tr(\pi_i)$. The completeness relation \eqref{POVMCompleteness} requires the following conditions on the parameters $\alpha_i$'s
\begin{eqnarray}{}
\sum\limits_{i=1}^{M}\alpha_i=2, \qquad
\sum\limits_{i=1}^{M}\alpha_i\boldsymbol{\hat{n}_i}=\boldsymbol{0},
\label{eq:sigmaalpha}
\end{eqnarray}
where $0 \leq \alpha_i\leq 1$. On the other hand, by inserting Eqs.~\eqref{eq:Gamma-rho} and \eqref{eq:defpi} into the Helstrom condition~\eqref{pi0}, and using Eq.~\eqref{eq:gamma0-p}, we find the unit vector $\boldsymbol{\hat{n}_i}$ as follows
\begin{equation} \label{eq:ni}
\boldsymbol{\hat{n}_i}=\frac{\boldsymbol{\tilde{v}_i}-\boldsymbol{\gamma}}{|\boldsymbol{\tilde{v}_i}-\boldsymbol{\gamma}|} .
\end{equation}
Obviously, knowing the vector $\boldsymbol{\gamma}$ is equivalent to knowing all the unit vectors $\boldsymbol{\hat{n}_i}$ corresponding to the non-null measurements $\pi_i$.
\begin{corollary} \label{corollary:1}
Based on Eqs.~\eqref{eq:sigmaalpha}, $\boldsymbol{\gamma}$ has to be confined within the convex polytope of the points $\boldsymbol{\tilde{v}_i}$'s.
\end{corollary}

\begin{corollary} \label{corollary:2}
From Eqs.~\eqref{eq:gamma0-p} to \eqref{eq:sigmaalpha}, it can be deduced that by translating all the vectors $\boldsymbol{\tilde{v}_i}$'s, while keeping $p_i$ unchanged, by a fixed vector inside the Bloch sphere, the POVM answers will be unchanged. The relative positions of $\boldsymbol{\tilde{v}_i}$'s are the only relevant factor. Therefore, if all $\boldsymbol{\tilde{v}_i}$'s move with the same displacement vector $\boldsymbol{a}$ ($\boldsymbol{\tilde{v}_i}\to\boldsymbol{\tilde{v}_i}+\boldsymbol{a}$), we expect the same answer. In other words, a set of $\{ p_i , \rho'_i \}$ yields the same measurement operators and guessing probability as the initial set $\{ p_i , \rho_i \}$, where
\begin{equation}
\rho'_i=\frac{1}{2}[\mathbb{1}+(\boldsymbol{v}_i+\frac{\boldsymbol{a}}{p_i}).\boldsymbol{\sigma}]
\quad\quad
\text{with}
\quad\quad
|\boldsymbol{v}_i+\frac{\boldsymbol{a}}{p_i}|\leq 1
\label{eq:rhoprimesameasrho}
\end{equation}
This class is characterized by an unchanged guessing probability.
\end{corollary}

From the preceding discussion, it is evident that to characterize the optimal measurements, both the vector $\boldsymbol{\gamma}$ and the real parameters $\alpha_i$'s need to be known.
Once $\boldsymbol{\gamma}$ is known, we can find all possible POVM answers using the following theorem.

\begin{theorem}
Assume that we have the operator $\Gamma$ \textnormal{(}or equivalently $\boldsymbol{\gamma}$\textnormal{)} for the corresponding problem of $N$ arbitrary mixed qubit states $\{p_i, \rho_i\}_{i=1}^N$. Then, finding all possible minimum-error POVM measurements of the problem is equivalent to finding all possible sets of $\{\alpha_i\}_{i=1}^{M}$ \textnormal{(}$0\leq\alpha_i\leq 1$\textnormal{)} that satisfy Eq. ~\eqref{eq:sigmaalpha}.
\end{theorem}

\begin{proof}
Since $\Gamma$ is given, we can determine $\gamma_0$ and $\boldsymbol{\gamma}$ using Eq.~\eqref{eq:defGamma}. By applying Eq.~\eqref{eq:gamma0-p}, we can identify the $M$ states that satisfy this equation. Then, utilizing Eq.~\eqref{eq:defpi} and the completeness relation for a POVM, we end up with Eq.~\eqref{eq:sigmaalpha}, where the corresponding unit vectors $\boldsymbol{\hat{n}_i}$ are defined by Eq.~\eqref{eq:ni}. Each set of $\{\alpha_i\}_{i=1}^{M}$ that satisfies Eq.~\eqref{eq:sigmaalpha} corresponds to a POVM set $\{\pi_i\}_{i=1}^{M}$ using Eq.~\eqref{eq:defpi}. From the explanation preceding Eq.~\eqref{eq:gamma0-p}, the $M$ operators fulfill the equality part of the Helstrom condition, Eq.~\eqref{eq:gamma0-pgeq}. For the remaining states in the problem, the corresponding operators must be null, satisfying the inequality part of Eq.~\eqref{eq:gamma0-pgeq}. Therefore, each set of $\{\alpha_i\}_{i=1}^{M}$ corresponds to an optimal POVM for the problem.
\end{proof}
On the other hand, if we are given a set of ME discrimination POVM $\{\pi_i\}$, we can find the Lagrange operator $\Gamma$ (or equivalently $\boldsymbol{\gamma}$) using the relation $\Gamma = \sum\limits_i\tilde{\rho}_i\pi_i$. Then, by utilizing Eq.~\eqref{eq:sigmaalpha}, we can obtain all possible optimal answers.
\subsection{Types of states and measurements in an optimal strategy}\label{subsection:Classification of optimal answers}
Similar to \cite{DeconinckPRA2010}, we distinguish the states in a general qubit state discrimination problem. We have shown that $\boldsymbol{\gamma}$ must be confined within the convex polytope of the points $\boldsymbol{\tilde{v}_i}$'s. Therefore, based on different representations of $\boldsymbol{\gamma}$, the states in a discrimination problem can be classified as unguessable, nearly guessable, and guessable states \cite{DeconinckPRA2010}. The \emph{unguessable} states are those for which $\Gamma - \tilde{\rho}_i$ is a full rank operator (inequality part of Eq.~\eqref{eq:gamma0-pgeq}) and their corresponding POVM elements are always null, indicating that they cannot be detected optimally. The \emph{nearly guessable} states are those for which the operator $\Gamma - \tilde{\rho}_i$ is not full rank, but they do not appear in any optimal measurement, and therefore, their related measurement operators are also null. Finally, the \emph{guessable} states are those that satisfy Eq.~\eqref{eq:gamma0-p}, and their POVM elements are non-null for some optimal measurements, indicating that they can be detected optimally.

It is important to study the conditions under which the optimal measurement of a discrimination problem is unique. First, based on the preceding discussion, we can state the following proposition for a discrimination problem with $N\leq4$
\begin{proposition}
For a discrimination problem with $N\leq4$, the set of parameters $\{ \alpha_i \}$ satisfying Eq.~\eqref{eq:sigmaalpha} is unique if the states form a $N-1$ simplex inside the Bloch sphere. However, this uniqueness condition does not necessarily hold for $N>4$ or $N\leq4$ when the states do not form a $N-1$ simplex. In such cases, each possible set of $\{\alpha_i\}$ can lead to a complete solution for the optimal measurement operators.
\end{proposition}
\begin{proof}
The proof can be derived from theorem 3.5.6 in \cite{LeonardBook2016}. According to this theorem, a point in a k-simplex with vertices $x_0,x_1,\cdots,x_k$ can be uniquely expressed as a convex combination of the vertices. So, in the problem of $N\leq4$ qubit state discrimination, if these states form a simplex, the convex combination for $\boldsymbol{\gamma}$ will be unique.
\end{proof}

Consequently, in cases where the solutions are non-unique, we may encounter situations where different POVM measurements yield the same optimal guessing probability. Thus, it is possible to construct different measurements by combining these measurements using convex combinations. This fact motivates us to categorize optimal measurements into two different families: \emph{decomposable} and \emph{non-decomposable} measurements.
\begin{definition}
Consider a POVM $M=\{\pi_1,\pi_2,\cdots ,\pi_m\}$ which is an optimal measurement for some discrimination problems. According to Eq.~\eqref{eq:defpi}, we can associate this POVM with a set of Bloch unit vectors $E=\{\boldsymbol{\hat{n}_1},\boldsymbol{\hat{n}_2},\cdots ,\pmb{\hat{n}_m}\}$, such that their convex polytope contains the Origin (Eq~\eqref{eq:sigmaalpha}). If it is possible to find a subset $E'\subset E$ such that the convex polytope formed by $E'$ still contains the Origin, then the POVM $M$ is considered \emph{decomposable}. Otherwise, if no such subset exists, it is categorized as \emph{non-decomposable}. Due to the three-dimensional nature of the directions represented by $\pmb{\hat{n}_i}$'s, it follows that \emph{non-decomposable sets can have at most four elements.}
\end{definition}
An exception for this definition is the existence of a measurement operator $\pi_j$ that is proportional to the unit matrix $\mathbb{1}$. In this case, there exists a one-element non-decomposable set $M'=\{\pi_i=\delta_{ij}\mathbb{1}\}$.\\ Now, we can derive an important result from the last definition, which justifies the inclusion of only four steps in our instruction: We can always find a non-decomposable solution for a given problem. \emph{Since each solution is characterized by the same operator $\Gamma$, it is always possible to find that operator using a maximum of four states of the problem which satisfy Eq.~\eqref{eq:gamma0-p}.}

\begin{theorem}\label{theorem:decomposable}
Let $M=\{\pi_i\}_{i=1}^m$ be a positive operator-valued measure (POVM) which is an optimal measurement for some minimum-error discrimination problems $\{ p_i, \rho_i \}_{i=1}^{N}$, with the corresponding set of measurement directions $E=\{\boldsymbol{\hat{n}_1},\boldsymbol{\hat{n}_2},\cdots ,\pmb{\hat{n}_m}\}$.If $M$ is decomposable, then there exists a non-empty subset $E'\subset E$ such that the corresponding POVM $M'$ is also optimal for the same minimum-error discrimination problem $\{ p_i, \rho_i \}_{i=1}^{N}$.
\end{theorem}
\begin{proof}
We have shown that a ME problem is characterized by an operator $\Gamma$, and for those states that satisfy Eq.~\eqref{eq:gamma0-p}, we can obtain $\alpha_i$ and $\boldsymbol{\hat{n}_i}$ from Eqs. \eqref{eq:sigmaalpha} and \eqref{eq:ni}. ME POVM measurements are only dependent on the set $\{\alpha_i\}$, and each possible set of $\{\alpha_i\}$ leads to a different POVM. If we have a set of $E=\{\alpha_i\}$ corresponding to the POVM $M$, then eliminating $\boldsymbol{\hat{n}_k}$ is equivalent to consider the corresponding $\alpha_k$, or equivalently $\pi_k$, to zero. From the decomposable definition, we suppose that the convex polytope of the reduced set $E'$ contains the origin, so Eq.~\eqref{eq:sigmaalpha} is satisfied. Therefore, there is a POVM $M'$ corresponding to $E'$ that is a ME solution to the problem as well.
\end{proof}
In simpler words, any subset $E'\subset E$ consisting of at most four elements of $\pmb{\hat{n}_i}$'s that contains the origin within the convex polytope of its elements, and cannot have any elements removed while still keeping the origin within the convex polytope of the remaining elements, is considered a \emph{non-decomposable subset}.

\subsection{Qubit states: Arbitrary priori probabilities} \label{subsection:Qubit states: Arbitrary priori probabilities}
Now, we will take a step further by considering a general problem of a set of qubit states with arbitrary prior probabilities. Before we provide an instruction on how to find answers for a general problem $\{ p_i, \rho_i \}_{i=1}^N$, it is important to note that we can find the Lagrange operator $\Gamma$ by using Eq.~\eqref{eq:gamma0-p}. Based on our result from the definition of non-decomposable POVMs, we know that a general minimum-error problem always has a non-decomposable solution, using a maximum of four states $\rho_i$'s that satisfy Eq.~\eqref{eq:gamma0-p}. This means that to find an answer, one could test a total of $\sum_{k=1}^4\binom{N}{k}$ possibilities! However, with the following instruction, we can find an answer much more quickly.

{\it Step  (i).---}First, we examine whether there exists a $\rho_j$ \emph{that has the greatest priori probability} ($p_j > p_i , \forall i$). If such a state $\rho_j$ exists, we need to check if the Lagrange operator $\Gamma$ is equal to $\tilde{\rho}_j$ or not. If $\Gamma=\tilde{\rho}_j$, it implies that $\gamma_0=p_j$ and $\boldsymbol{\gamma}=\boldsymbol{\tilde{v}_j}$ according to Eqs.~\eqref{eq:defrho} and \eqref{eq:defGamma}. Then, to satisfy the Helstrom condition given by Eq.~\eqref{eq:gamma0-pgeq}, the following condition must be hold
\begin{equation} \label{pj-pi}
p_{ji}\geq \tilde{d}_{ji}.
\end{equation}
where $p_{ji}:=p_j-p_i$ and $\tilde{d}_{ji}:=|\boldsymbol{\tilde{v}_j}-\boldsymbol{\tilde{v}_i}|$.\\If the above condition is satisfied for every $i$, the optimal POVM answer will be $\pi_i=\delta_{ij}\mathbb{1}$, and the guessing probability $P_{\textrm{guess}}$ will be equal to $p_j$. It is called ``no measurement strategy," where the state $\rho_j$ is simply guessed \cite{HunterPRA2003}.

{\it Step  (ii).---} If the above condition is not satisfied, we proceed to test pairs of qubit states. First, let us consider two qubit states $\rho_l$ and $\rho_m$ with arbitrary priori probabilities $p_l$ and $p_m$ ( $p_l\geq p_m$). By applying Eq.~\eqref{eq:gamma0-p} to states $\rho_l$ and $\rho_m$, and then subtracting the results, we obtain
\begin{equation} \label{eq:pl-pm}
p_{lm}= |\boldsymbol{\tilde{v}_m}-\boldsymbol{\gamma}| - |\boldsymbol{\tilde{v}_l}-\boldsymbol{\gamma}|,
\end{equation}
which defines a hyperbola with its foci at $\boldsymbol{\tilde{v}_l}$ and $\boldsymbol{\tilde{v}_m}$. This hyperbola is characterized by parameters $a$, $b$, and $c$, which can be derived from the Bloch vectors of states $\rho_l$ and $\rho_m$, and their corresponding probabilities $p_l$ and $p_m$ (see Fig.~\ref{fig:hyperbola}), as
\begin{equation} \label{eq:hyperbola1}
a=\frac{1}{2}~p_{lm}, \qquad
c=\frac{1}{2}~\tilde{d}_{lm},
\end{equation}
and
\begin{equation} \label{eq:hyperbola2}
b=\sqrt{c^2-a^2},
\end{equation}
It follows from Eq.~\eqref{eq:gamma0-pgeq} that only the left branch of the hyperbola could provide a candidate answer for $\boldsymbol{\gamma}$.

\begin{figure}
\mbox{\includegraphics[scale=0.4]{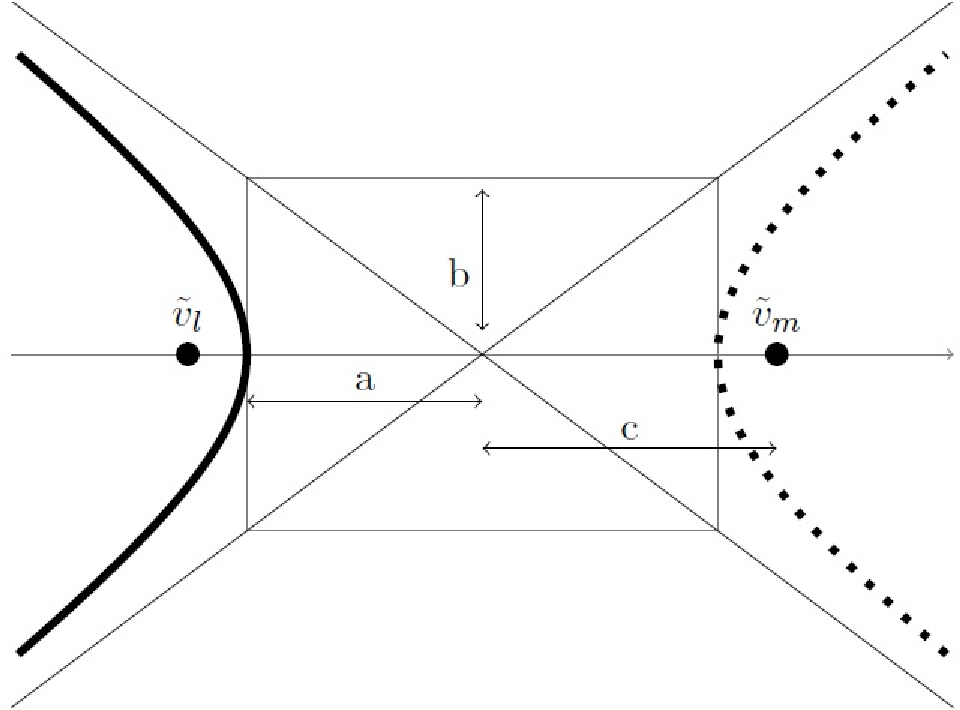}}
\caption{The hyperbola given by Eq.~\eqref{eq:pl-pm}, which only the left part of this hyperbola can be considered as an allowed area for $\boldsymbol{\gamma}$. (The existence of this hyperbola is dependent on the positivity of Eq.~\eqref{eq:Rlm})}
\label{fig:hyperbola}
\end{figure}

In this case, the candidate $\boldsymbol{\gamma}$ is located at the distance of
\begin{equation} \label{eq:Rlm}
\tilde{R}_{lm}=c-a=\frac{1}{2} \left[ \tilde{d}_{lm} - p_{lm} \right],
\end{equation}
from point $\boldsymbol{\tilde{v}_l}$ on the connecting line between $\boldsymbol{\tilde{v}_l}$ and $\boldsymbol{\tilde{v}_m}$. Since $\tilde{R}_{lm}$ is a distance, it must be positive. Therefore, possible candidates for $\gamma_0$ and $\boldsymbol{\gamma}$ are
\begin{equation}\label{eq:gamma0lm}
\gamma_{0_{lm}}'~=\frac{1}{2}\left[\left(p_l+p_m\right)+\tilde{d}_{lm}\right],
\end{equation}
and
\begin{eqnarray}\label{eq:gammalm}
\boldsymbol{\gamma}'_{lm}&=&\frac{1}{2} \left[ (\boldsymbol{\tilde{v}_l}+\boldsymbol{\tilde{v}_m})+\frac{p_{lm}}{\tilde{d}_{lm}}(\boldsymbol{\tilde{v}_l}-\boldsymbol{\tilde{v}_m}) \right] \\ \nonumber
&=&\frac{1}{2}\left[\left(1+\frac{p_{lm}}{\tilde{d}_{lm}}\right)\boldsymbol{\tilde{v}_l}+\left(1-\frac{p_{lm}}{\tilde{d}_{lm}}\right)\boldsymbol{\tilde{v}_m}\right],
\end{eqnarray}
where the primes indicate that they are still candidate answers. Now, the question is: which two states to pick? In this case, we have the following proposition
\begin{proposition}
In a case where a two-element optimal POVM exists, the answer can be obtained by considering two states which maximize 
${\gamma_{0_{lm}}'}$.
\end{proposition}
\begin{proof}
For simplicity, consider that ${\gamma_{0_{12}}'}$ is the maximum. Assume that there are states $i$ and $j$ with $\gamma_{0_{ij}}'<\gamma_{0_{12}}'$ that satisfy Helstrom conditions. It means that
\begin{eqnarray}
\gamma_{0_{ij}}'-p_l=|\boldsymbol{\gamma}'_{ij}-\boldsymbol{\tilde{v}_l}|, \enspace\enspace \text{for} \enspace\enspace l=i,j \\
\gamma_{0_{ij}}'-p_l > |\boldsymbol{\gamma}'_{ij}-\boldsymbol{\tilde{v}_l}|, \enspace\enspace \text{for} \enspace\enspace l\neq i,j 
\end{eqnarray}
then, writing the inequality for $l=1,2$ and summing them gives
\begin{eqnarray}
2\gamma_{0_{ij}}' &>&|\boldsymbol{\gamma}'_{ij}-\boldsymbol{\tilde{v}_1}|+|\boldsymbol{\gamma}'_{ij}-\boldsymbol{\tilde{v}_2}|+(p_1+p_2)\nonumber \\
&\geq& |\boldsymbol{\tilde{v}_1}-\boldsymbol{\tilde{v}_2}|+(p_1+p_2)=2\gamma_{0_{12}}' ,
\end{eqnarray}
the second inequality comes from the triangle inequality. So, we end up with $\gamma_{0_{ij}}' > \gamma_{0_{12}}'$, which is in contradiction with our first assumption $\gamma_{0_{ij}}' < \gamma_{0_{12}}'$.
\end{proof}

To proceed, we need to find two states that maximize Eq.~\eqref{eq:gamma0lm}. Then, we can test them using the following condition
\begin{equation} \label{eq:plgeqpi}
p_l+|\boldsymbol{\tilde{v}_l}-\boldsymbol{\gamma}|\geq p_i+|\boldsymbol{\tilde{v}_i}-\boldsymbol{\gamma}|,\enspace\enspace\enspace \forall{i}
\end{equation}
which is derived from the Helstrom condition, Eq.~\eqref{eq:gamma0-pgeq}. The cases $i=l$ and $m$ are trivial. If Eq.~\eqref{eq:gamma0lm} is maximized for more than one pair of states, we must test the Helstrom condition for each of them.
Note that for a typical ME discrimination problem, there may be multiple possible solutions, and we only need to find one of them to determine $\boldsymbol{\gamma}$. If the condition~\eqref{eq:plgeqpi} is not satisfied, we proceed to the next step.

{\it Step  (iii).---} In this step, we consider the three-state case. We label them as $l$, $m$, and $n$. From these three states, we construct $\boldsymbol{\gamma}'$ by drawing three hyperbolas corresponding to each pair of states $(lm, ln, mn)$. The resulting convex polytope is a triangle, and we only need to determine the point where two hyperbolas intersect ($\boldsymbol{\gamma}'$). This point must be inside the triangle. The corresponding probability can be calculated using Eq.~\eqref{eq:gamma0lmn}. As in the second step, we need to test the Helstrom condition for the three states that maximize this probability. The cases where $i=l$, $m$, or $n$ are trivial. Similar to the procedure in the previous step, we may have more than one three-state case to maximize Eq.~\eqref{eq:gamma0lmn}. In this case, we must test the Helstrom condition for each of them. However, if the Helstrom condition is not satisfied, we proceed to the next step.

{\it Step  (iv).---} This step is expected to yield the answer since we know that there is always a non-decomposable answer with a maximum of four states. In this case, although there are $\binom{4}{2}=6$ hyperbolas, it is enough that three hyperbolas, with a common focus point $\boldsymbol{\tilde{v}_m}$, to intersect at a single point inside the polytope to obtain $\boldsymbol{\gamma}'$. The related probability can be obtained from Eq.~\eqref{eq:gamma0lmnk}. The four-state case that maximizes this probability must be the states we are searching for to find $\boldsymbol{\gamma}$. Note that, similar to the previous steps, this four-state case that maximizes Eq.~\eqref{eq:gamma0lmnk} may not be unique.\\
The deriving of Eq.~\eqref{eq:gamma0lmn} and Eq.~\eqref{eq:gamma0lmnk} for steps (iii) and (iv) respectively will be discussed in the next section.\\

As discussed earlier, we do not need to consider more than four qubits/steps to find $\Gamma$. An alternative explanation is stated as follows:\\ considering the Bloch representation of $\Gamma$, Eq.~\eqref{eq:defGamma}, it is evident that there are only four unknowns to be determined: a scalar $\gamma_0$ and a three-dimensional vector $\boldsymbol{\gamma}$. These can be easily determined by considering four qubit states along with their corresponding Bloch vectors (See Eq.~\eqref{eq:gamma0-p}.\\
Although we introduced a way to find the necessary states that lead to the Lagrange operator, the fourth step seems to be less likely. This is because it is often expected to have at least one non-decomposable POVM answer with fewer elements for a given probability.

At this stage, it is worth highlighting some notable results:
\begin{itemize}

\item According to the above instruction, Eq.~\eqref{eq:pl-pm} plays a crucial role in finding $\gamma$. This equation remains unchanged as long as both the relative distances between the states $\tilde{d}_{lm}$'s and the prior probabilities $p_i$'s are consistent. It implies that by translation or rotation of the polytope, or equivalently all $\boldsymbol{\tilde{v}_i}$'s as a whole, with a fixed vector inside the Bloch sphere, will result in a consistent guessing probability. This class is characterized by an \emph{unchanged guessing probability.}
\item Furthermore, if we perform a translation and rescaling of the polytope inside the Bloch sphere while keeping the $p_i$ values unchanged, the measurement operators defined by Eq.~\eqref{eq:defpi} will also remain unchanged. This class is also characterized by an \emph{unchanged measurement operators.}
\item Based on the aforementioned instruction, to obtain a complete solution for a general qubit state discrimination problem, we need to find all non-decomposable sets, $E'=\{\boldsymbol{\hat{n}_1},\boldsymbol{\hat{n}_2},\cdots \}$, of the problem. After finding $\boldsymbol{\gamma}$, it can be simply accomplished by looking at the geometry of the $\boldsymbol{\hat{n}_i}$'s (see the explanation after theorem~(\ref{theorem:decomposable})). Then, we can obtain all of the possible optimal POVMs using convex combinations of all non-decomposable sets, i.e., $\sum_{i=1}^k \beta_i M'_i$, where $k$ is the number of all non-decomposable sets, $0\leq \beta_i \leq 1$, and $\sum_{i=1}^k \beta_i=1$ to satisfy completeness relation of a POVM measurement.\\
For this purpose, in the next section, we will analytically solve the problem for these cases. Some of these results, such as the problem of two-state discrimination, have been known for a long time.
\end{itemize}
\section{Solution in qubit state discrimination for $N\leq4$}\label{section:Solution in qubit state discrimination}
\subsection{Two-state case}\label{subsection:Two States Case}
As the first example, let us consider the case of two arbitrary qubit states $\rho_1$ and $\rho_2$ with priori probabilities $p_1\geq p_2$. We will follow the steps described above. First, if $\Gamma=\tilde{\rho}_1$, then the measurement operators are given by $\pi_1=\mathbb{1}$ and $\pi_2=0$, and the guessing probability is $P_{\textrm{guess}}=\gamma_0=p_1$. However, if this condition is not met, we know that from Eq.~\eqref{eq:sigmaalpha} $\alpha_1=\alpha_2=1$, and the corresponding POVMs are projectors given by Eq.~\eqref{eq:defpi} with
$\boldsymbol{\hat{n}_1}=-\boldsymbol{\hat{n}_2}=\frac{\tilde{v}_1-\tilde{v}_2}{\tilde{d}_{12}}$ and $\tilde{d}_{12}=|\boldsymbol{\tilde{v}_1}-\boldsymbol{\tilde{v}_2}|$.
The associated guessing probability is then given by Eq.~\eqref{eq:gamma0lm}
\begin{equation}
P_{\textrm{guess}}=\frac{1}{2}(1+\tilde{d}_{12}),
\end{equation}
which is the well-known Helstrom relation for two states \cite{HelstromBook1976}.
\subsection{Three-state case} \label{subsection:Three States}
We now consider a general case of three arbitrary qubit states with prior probabilities $p_1\geq p_2\geq p_3$. To proceed, note that the discrimination of an arbitrary set of three-qubit states can be reduced to discrimination of three-qubit states,  all embedded in $x-z$ plane, defined by
\begin{eqnarray}
\rho_1 &=&\frac{1}{2}
\begin{pmatrix}
1+a & 0 \\
0 & 1-a \\
\end{pmatrix} , \nonumber
\\
\rho_2 &=&\frac{1}{2}
\begin{pmatrix}
1+b\cos{\theta} & b\sin{\theta} \\
b\sin{\theta} & 1-b\cos{\theta} \\
\end{pmatrix} , \nonumber
\\
\rho_3 &=&\frac{1}{2}
\begin{pmatrix}
1+c\cos{\phi} & -c\sin{\phi} \\
-c\sin{\phi} & 1-c\cos{\phi} \\
\end{pmatrix} .
\label{three states density matrices}
\end{eqnarray}
where $\theta$ and $\phi$ are illustrated in Fig.~\ref{fig:three states}. The proof is provided in Appendix~\ref{section:The Rotation Matrix}. Then, the corresponding Bloch vectors $\boldsymbol{\tilde{v}_i}$'s are given by
\begin{eqnarray}
\boldsymbol{\tilde{v}_1}&=&(0,ap_1) ,\nonumber \\
\boldsymbol{\tilde{v}_2}&=& (bp_2\sin{\theta},bp_2\cos{\theta}) , \nonumber \\
\boldsymbol{\tilde{v}_3}&=&(-cp_3\sin{\phi},cp_3\cos{\phi}),
\end{eqnarray}
where $(x,z)$ is a simplified representation for $(x,0,z)$.

\begin{figure}[h]
\mbox{\includegraphics[scale=0.55]{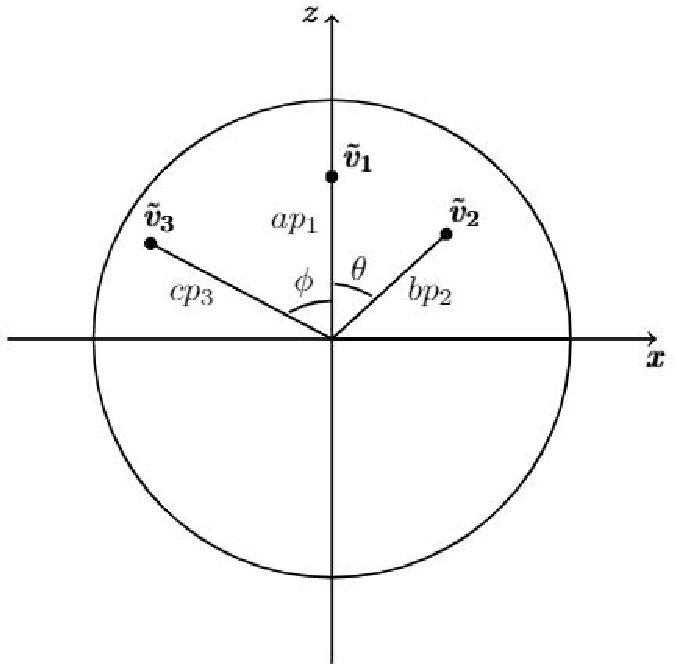}}
\caption{The Bloch representation of the three states $\rho_1$, $\rho_2$, and $\rho_3$ on the $x-z$ plane, with their corresponding priori probabilities $p_1\geq p_2\geq p_3$. The circle is the intersection of the Bloch sphere with the $x-z$ plane.}
\label{fig:three states}
\end{figure}

With the assumption that $p_1$ is the greatest priori probability, first, we have to check whether $\tilde{\rho}_1$ is equal to  $\Gamma$ or not. To do this, we refer to Eq.~\eqref{pj-pi} and check the following conditions
\begin{eqnarray}\label{nomeasurement1}
\bigl|(-bp_2\sin{\theta},ap_1-bp_2\cos{\theta})\bigr|&\leq& p_{12},  \\ \label{nomeasurement2}
\bigl|(cp_3\sin{\phi},ap_1-cp_3\cos{\phi})\bigr|&\leq& p_{13}.
\end{eqnarray}
Here, $|(x,z)^{\textrm{t}}|=\sqrt{x^2+z^2}$. If the above conditions are not satisfied, we must calculate $P_{\textrm{guess}}$ from Eq.~\eqref{eq:gamma0lm} for each pair of states, and then check the Helstrom condition using Eq.~\eqref{eq:plgeqpi}. The explicit form of these conditions to detect one of the pairs $(\rho_1,\rho_2)$, $(\rho_1,\rho_3)$ or $(\rho_2,\rho_3)$ ,respectively, are
\begin{eqnarray}\label{two-elements12}
\bigl|(&x_{12}'&+cp_3\sin{\phi}, z_{12}'-cp_3\cos{\phi})\bigr|\leq \tilde{R}_{12}+p_{13}, \quad
\\ \label{two-elements13}
\bigl|(&x_{13}'&-bp_2\sin{\theta}, z_{13}'-bp_2\cos{\theta})\bigr|\leq \tilde{R}_{13}+p_{12},\quad
\\ \label{two-elements23}
\bigl|(&x_{23}'&, z_{23}'-ap_1)\bigr|\leq \tilde{R}_{23}-p_{12}.
\end{eqnarray}
with the corresponding $\boldsymbol{\gamma}_{ij}'=(x_{ij}',z_{ij}')^{\textrm{t}}$ for these states obtained from Eq.~\eqref{eq:gammalm}, we have
\begin{eqnarray} \label{eq:12-23-13}
x_{12}'&=&\frac{bp_2\tilde{R}_{12}\sin{\theta}}{2 \tilde{R}_{12}+p_{12}} ,
\\
z_{12}'&=&a p_1-\frac{(a p_1 -bp_2\cos{\theta})\tilde{R}_{12}}{2 \tilde{R}_{12}+p_{12}} , \nonumber \\
x_{23}'&=&bp_2\sin{\theta}-\frac{(bp_2\sin{\theta}+cp_3\sin{\phi})\tilde{R}_{23}}{2\tilde{R}_{23}+p_{23}} , \nonumber \\
z_{23}'&=&bp_2\cos{\theta}  -\frac{(bp_2\cos{\theta}-cp_3\cos{\phi})\tilde{R}_{23}}{2 \tilde{R}_{12}+p_{23}} , \nonumber \\
\nonumber
x_{13}'&=&-\frac{cp_3\tilde{R}_{13}\sin{\phi}}{2\tilde{R}_{13}+(p_{13})} ,
\\
z_{13}'&=&ap_1-\frac{(ap_1-cp_3\cos{\phi})\tilde{R}_{13}}{2 \tilde{R}_{13}+(p_{13})} , \nonumber
\end{eqnarray}
where $p_{ij}:=p_i-p_j$ and $\tilde{R}_{lm}$ is defined in Eq.~\eqref{eq:Rlm}.\\
For any specific problem with known values of $p_i$, $\theta$, and $\phi$, it is important to note that each condition in Eqs.~\eqref{two-elements12} to \eqref{two-elements23} that is satisfied will correspond to the unique solution of the problem.
Accordingly, the optimal POVM elements are given by $\pi_i=\frac{1}{2}(\mathbb{1}+\boldsymbol{\hat{n}_i} \cdot \boldsymbol{\sigma})$, as defined in Eq.~\eqref{eq:defpi}. The unit vectors corresponding to the non-null measurement elements can be written as
\begin{equation} \label{nlm}
\boldsymbol{\hat{n}_l}=-\boldsymbol{\hat{n}_m}=\frac{\boldsymbol{\tilde{v}_l}-\boldsymbol{\tilde{v}_m}}{\tilde{d}_{lm}}
\end{equation}

Finally, if none of the aforementioned conditions are met, the solution for $\boldsymbol{\gamma}$ must be found while all three states are detectable. In other words, three hyperbolas should intersect at a single point. Note that since three points are embedded in a plane, it is sufficient to find the intersection between two hyperbolas in this plane. So, by using the properties of hyperbolas, the guessing probability can be obtained as 

\begin{figure}[h]
\mbox{\includegraphics[scale=0.42]{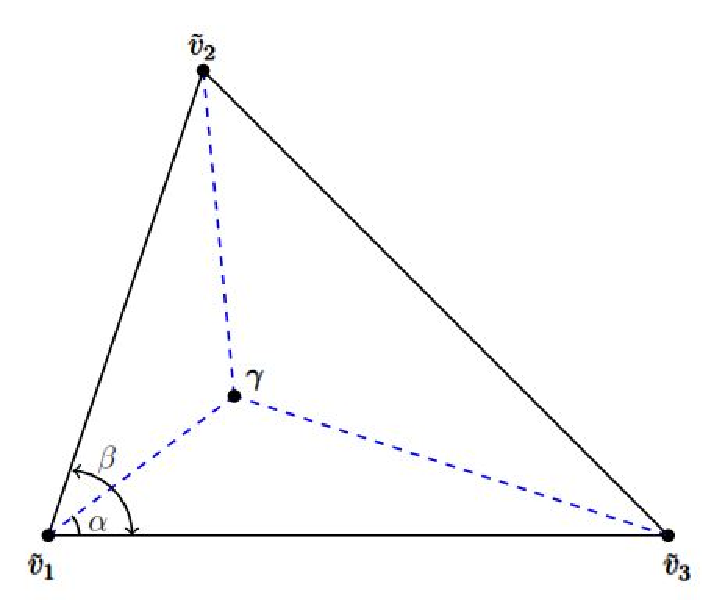}}
\caption{The Bloch representation depicts the three-state case, where all three states correspond to a three-element non-decomposable POVM.}
\label{fig:triangle}
\end{figure}

\begin{eqnarray}
P_{\textrm{guess}}=\gamma_{0}&=&p_1+|{\boldsymbol{\gamma}}-\boldsymbol{\tilde{v}_1}| \nonumber\\
&=&p_1+\frac{\tilde{d}_{12}^2-p_{12}^2}{2(\tilde{d}_{12}\cos(\beta-\alpha)+p_{12})},\label{eq:gamma0lmn}
\end{eqnarray}
where 

\begin{eqnarray}
\alpha= \arccos(\frac{h_0h_2+h_1\sqrt{h_0^2+h_1^2-h_2^2}}{h_0^2+h_1^2})
\end{eqnarray}

and
\begin{equation}
\beta=\arccos(\frac{\tilde{d}_{12}^2+\tilde{d}_{13}^2-\tilde{d}_{23}^2}{2\tilde{d}_{12}\tilde{d}_{13}}).
\end{equation}
where
\begin{eqnarray}
h_0&=&\tilde{d}_{12}(\tilde{d}_{13}^2-p_{13}^2)\cos(\beta)-\tilde{d}_{13}(\tilde{d}_{12}^2-p_{12}^2), \nonumber \\
h_1&=&\tilde{d}_{12}(\tilde{d}_{13}^2-p_{13}^2)\sin(\beta), \nonumber \\ 
h_2&=&p_{13}(\tilde{d}_{12}^2-p_{12}^2)-p_{12}(\tilde{d}_{13}^2-p_{13}^2).
\end{eqnarray}
To find the vector $\boldsymbol{\gamma}$, we can use the geometric of a triangle as illustrated in Fig.~\ref{fig:triangle}. By employing the Gram-Schmidt method, we can use two edges of the triangle, namely $\boldsymbol{\tilde{v}_2}-\boldsymbol{\tilde{v}_1}$ and $\boldsymbol{\tilde{v}_3}-\boldsymbol{\tilde{v}_1}$,  to construct an orthonormal basis within the plane where the triangle lies. Let us denote this orthonormal basis as $\boldsymbol{\hat{k}_1}$ and $\boldsymbol{\hat{k}_2}$. Then, we have
\begin{equation}
\boldsymbol{\hat{k}_1}=\frac{\boldsymbol{\tilde{v}_2}-\boldsymbol{\tilde{v}_1}}{\tilde{d}_{12}}\quad , \qquad \boldsymbol{\hat{k}_2}=\frac{\vec{f}}{|f|} 
\end{equation}
where
\begin{equation}
\vec{f}=(\boldsymbol{\tilde{v}_3}-\boldsymbol{\tilde{v}_1}) - \boldsymbol{\hat{k}_1} [\frac{(\boldsymbol{\tilde{v}_2}-\boldsymbol{\tilde{v}_1})}{\tilde{d}_{12}}.(\boldsymbol{\tilde{v}_3}-\boldsymbol{\tilde{v}_1})]. \nonumber
\end{equation}
Hence, $\boldsymbol{\gamma}$ can be expressed as follows
\begin{equation}
\boldsymbol{\gamma}=\frac{\tilde{d}_{12}^2-p_{12}^2}{2(\tilde{d}_{12}\cos(\beta-\alpha)+p_{12})} (\cos(\beta-\alpha)\boldsymbol{\hat{k}_1}+\sin(\beta-\alpha)\boldsymbol{\hat{k}_2})+\boldsymbol{\tilde{v}_1},
\end{equation}
So, POVM elements can be obtained using Eqs.~\eqref{eq:defpi} and \eqref{eq:ni}. 

It is important to emphasize that before obtaining the optimal operators, we perform transformations on the Bloch vectors using a rotation matrix (detailed in Appendix~\ref{section:The Rotation Matrix}) and a displacement.\\In case of $\rho_i$ and $\rho_j$ to be two detectable states, optimal operators are $\pi_i=\rho_{\boldsymbol{\hat{n}_i}}$ and $\pi_j=\rho_{-\boldsymbol{\hat{n}_i}}$, where $\boldsymbol{\hat{n}_i}=\frac{\boldsymbol{\tilde{v}_i}-\boldsymbol{\tilde{v}_j}}{\tilde{d}_{ij}}$. So, the answers for the optimal operators $\{\pi_i\}$ can be easily obtained using the vectors of the original problem. However, for the case in which all three states are detectable, we need to obtain $\boldsymbol{\hat{n}_i}$'s and then rotate them back to the original problem using the inverse of the rotation matrix. Afterward, we can use the formula $\pi_i=\rho_{\boldsymbol{\hat{n}_i}}$ to obtain the optimal operators for the original problem. \\
It is worth mentioning that the result we obtained here is in perfect agreement with the findings of \cite{HaPRA2013}, which were obtained through a different approach that involved solving the duality problem. This agreement provides strong support for the validity and reliability of our results, further highlighting the consistent relationship between the dual variables and the optimization objective.
\subsubsection*{Trine states}
Here, for more illustration of this case, let us consider the discrimination among trine states, which refers to the qubit states associated with equidistant points on the surface of the Bloch sphere. These states are defined by
\begin{eqnarray}
\ket{\psi_1}&=&\frac{1}{\sqrt{2}} (\ket{0}+\ket{1}) , \nonumber \\
\ket{\psi_2}&=&\frac{1}{\sqrt{2}} (\ket{0}+e^{+\frac{2\pi}{3} i}\ket{1}) , \nonumber \\
\ket{\psi_3}&=&\frac{1}{\sqrt{2}} (\ket{0}+e^{-\frac{2\pi}{3} i}\ket{1}),
\end{eqnarray}
Then, based on the previous discussion, it is possible to rotate these states on the Bloch sphere to align them on the $x-z$ plane
\begin{eqnarray}
\ket{\psi'_1}&=&\ket{0} , \nonumber \\
\ket{\psi'_2}&=&\frac{1}{2} (\ket{0}+\sqrt{3}\ket{1}) , \nonumber \\
\ket{\psi'_3}&=&\frac{1}{2} (\ket{0}-\sqrt{3}\ket{1}).
\end{eqnarray}
By writing the corresponding density matrices and comparing them with Eq.~\eqref{three states density matrices}, we find that $a=b=c=1$ and $\theta=\phi=2\pi/3$.

To proceed, let us consider the case with equal prior probabilities. Since the three states are pure and located on the surface of the Bloch sphere, finding the answer is straightforward. According to the discussion in Sect.~\ref{subsection:Discrimination of qubit states} (or more straightforward, later, from Appendix.~\ref{Qubit states: Equal priori probabilities}), the guessing probability is $P_{\textrm{guess}}=\frac{2}{3}$ and the corresponding POVM operators for the original problem are $\pi_i=\frac{2}{3}\ket{\psi_i}\bra{\psi_i}$, known as the trine measurement \cite{BanIJTP1997}.

Next, we consider the case with arbitrary prior probabilities. We use the same parameterization of Weir \etal \cite{WeirQST2018} for the prior probabilities: $p_1=p+\delta$, $p_2=p-\delta$, and $p_3=1-2p$. Here, it is assumed that $0<p_3\le p_2\le p_1$, which implies that $\frac{1}{3}\leq p \leq\frac{1}{2}$ and $0\le \delta\le \min\{3p-1, p\}$.

Using the method described in  Sect.~\ref{subsection:Qubit states: Arbitrary priori probabilities}, we can conclude that the ``no measurement strategy'' cannot be optimal for $\gamma_0=p_1$ and $\boldsymbol{\gamma}=p_1\boldsymbol{\hat{z}}$. It is obvious from our analysis that the condition $p_{1i}\geq \tilde{d}_{1i}$ is not satisfied for all $i$. The same result is obtained from Eqs.~\eqref{nomeasurement1} and \eqref{nomeasurement2} as well.
So, we must have either two or three POVM measurement elements. For the two-state case, it can be observed that Eq.~\eqref{eq:gamma0lm} is maximum for states $\rho_1$ and $\rho_2$.
In this case, Eq.~\eqref{two-elements12} yields
\begin{equation} \label{eq:solvedelta}
\frac{1}{2}\sqrt{(\frac{2\delta^2}{\tilde{d}_{12}}+2-3p)^2+3(\delta+\frac{2p\delta}{\tilde{d}_{12}})^2}\le \frac{\tilde{d}_{12}}{2}-1+3p
\end{equation}
where $\tilde{d}_{12}=\sqrt{3p^2+\delta^2}$. By solving Eq.~\eqref{eq:solvedelta} for $\delta$, we obtain four roots, of which the only valid values for $\frac{1}{3}\le p\le \frac{1}{2}$ and $ 0\le \delta$ are given by
\begin{equation} \label{eq:delta}
\delta \leq \left(2-6p+5p^2-2(1-2p)\sqrt{4p^2-2p+1}\right)^{\frac{1}{2}}.
\end{equation}
In this region of the two-state case, the guessing probability is
\begin{equation} \label{eq:pguessdelta}
P_{\textrm{guess}}=\frac{1}{2}\sqrt{3p^2+\delta^2}+p,
\end{equation}

In the region where the POVM measurement needs to be a three-element POVM, we can use Eq.~\eqref{eq:pl-pm} for three states to find $\boldsymbol{\gamma}=(\gamma_x,\gamma_y,\gamma_z)$. By employing these equations and performing algebraic calculations, we can obtain the desired $\boldsymbol{\gamma}$ as follows
\begin{eqnarray}
\gamma_x&=&\frac{2\sqrt{3}(1-2p)(p-\delta)(p+\delta)^2(\delta-3p+1)}{9p^4-4p^3+6p^2\delta^2-12p\delta^2+4\delta^2+\delta^4},  \\
\gamma_y&=&0, \nonumber \\
\gamma_z&=&\frac{2(1-2p)(p^2-\delta^2)\bigl(-3p^2+(6\delta+1)p+\delta^2-3\delta \bigr)}{9p^4-4p^3+6p^2\delta^2-12p\delta^2+4\delta^2+\delta^4}. \nonumber
\end{eqnarray}
Using $P_{\textrm{guess}}=\gamma_0=p_i+|\boldsymbol{\tilde{v}_i}-\boldsymbol{\gamma}|$ or Eq.~\eqref{eq:gamma0lmn}, the guessing probability can be calculated
\begin{eqnarray}
P_{\textrm{guess}}&=&\gamma_0=p+\delta+\sqrt{\gamma_{x}^2+(\gamma_z-p-\delta)^2} \nonumber\\
&=&\frac{2(1-2p)(p^2-\delta^2)(3p^2+\delta^2-2p)}{9p^4-4p^3+6p^2\delta^2-12p\delta^2+4\delta^2+\delta^4}, \nonumber
\end{eqnarray}
These equations, namely \eqref{eq:delta}, \eqref{eq:pguessdelta}, and the derived expression for $P_{\textrm{guess}}$, are in complete agreement with \cite{WeirQST2018} which was obtained using a different approach.\\

\subsection{Four-state case}\label{subsection:Four States}
In this section, we consider a general problem of four qubit states. According to the previous discussion, we can divide this problem into two cases:\\ (i) When the four states form a two-dimensional convex polytope, according to Caratheodory's theorem, the vector $\boldsymbol{\gamma}$ can be written as a convex combination of at most three points. Therefore, the optimal measurement has at most three non-null elements. In this case, we can solve the problem by following the three-step instruction provided in Sect.~\ref{subsection:Qubit states: Arbitrary priori probabilities}. \\ (ii)When the four states form a three-dimensional polytope, specifically a tetrahedron, a four-element optimal POVM may exist. To find the guessing probability and optimal measurement in this case, we can use the properties of tetrahedrons. A tetrahedron consists of four triangular faces, six edges, and four vertices. Depending on the location of $\boldsymbol{\gamma}$, we can specify the number of detectable states optimally. If $\boldsymbol{\gamma}$ is on one vertex of the tetrahedron, the optimal answer is the no measurement strategy, and the other three states are unguessable. If $\boldsymbol{\gamma}$ lies on one edge, the optimal measurement will be a POVM with two non-null elements, allowing for two guessable states. Furthermore, if $\boldsymbol{\gamma}$ lies on one of the faces, three states can be detected through an optimal measurement, while one state will remain unguessable. Finally, if $\boldsymbol{\gamma}$ is an interior point of the tetrahedron, all four states are guessable, meaning that the six hyperbolas meet at a single point. To find the guessing probability for this case, let us consider the vector $\boldsymbol{\gamma}$ as an interior point in the tetrahedron $\boldsymbol{\tilde{v}_1} \boldsymbol{\tilde{v}_2} \boldsymbol{\tilde{v}_3} \boldsymbol{\tilde{v}_4}$, therefore the guessing probability can be written as 
\begin{widetext}
\begin{eqnarray}\label{eq:gamma0lmnk}
P_{\textrm{guess}}=p_1+ 
\frac{-2D_{123}p_{12}(\tilde{d}_{12}^2-p_{12}^2)\cos^2{\theta}-l_0l_1-\cos{\theta}\sqrt{F}}{4 D_{123}(\tilde{d}_{12}^2-p_{12}^2)\cos^2{\theta}+l_1^2},
\label{pguess4}
\end{eqnarray}
where the quantities $l_0$, $l_1$, $D_{123}$ and $F$ are given by
\begin{eqnarray} 
l_0&=&\tilde{d}_{12}^2(\tilde{d}_{13}^2+\tilde{d}_{23}^2-\tilde{d}_{12}^2+p_{12}^2-2p_{13}^2)+p_{12}^2(\tilde{d}_{13}^2-\tilde{d}_{23}^2), \\
l_1&=&2 p_{12}(\tilde{d}_{12}^2+\tilde{d}_{13}^2-\tilde{d}_{23}^2)-4 p_{13} \tilde{d}_{12}^2, \nonumber \\
D_{123}&=&-(\tilde{d}_{12}+\tilde{d}_{13}+\tilde{d}_{23})(\tilde{d}_{12}+\tilde{d}_{13}-\tilde{d}_{23})(\tilde{d}_{23}+\tilde{d}_{12}-\tilde{d}_{13})(\tilde{d}_{13}+\tilde{d}_{23}-\tilde{d}_{12}), \nonumber \\
F&=&\bigl(4\cos^2{\theta}D_{123}(\tilde{d}_{12}^2-p_{12}^2)\tilde{d}_{12}^2+l_1^2(\tilde{d}_{12}^2-p_{12}^2)+4p_{12}l_0 l_1-4 l_0^2\bigr)D_{123}(\tilde{d}_{12}^2-p_{12}^2),
\label{l0l1}
\end{eqnarray}
\end{widetext}
and $\theta$ is the dihedral angle of the edge $\boldsymbol{\tilde{v}_1} \boldsymbol{\tilde{v}_2}$ in the tetrahedron $\tilde{\boldsymbol{\gamma}} \boldsymbol{\tilde{v}_1}\boldsymbol{\tilde{v}_2} \boldsymbol{\tilde{v}_3}$ (Fig.~\ref{fig:tetrahedron}). In appendix \ref{section:Tetrahedron geometry}, we demonstrate how to determine this angle, along with the mathematical framework we used to derive Eq.~\eqref{pguess4} for the problem of four qubit states.
\begin{figure}[t]
\mbox{\includegraphics[scale=0.55]{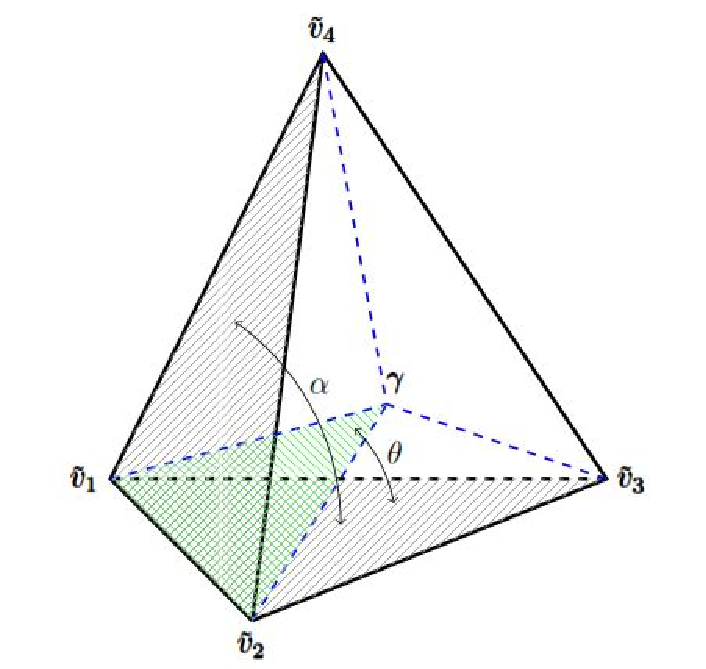}}
\caption{Representation of the tetrahedron formed by a four-state case, with $\theta$ and $\alpha$ as the dihedral angles.}
\label{fig:tetrahedron}
\end{figure}

To find the optimal POVM in this case, we need to first find the vector $\boldsymbol{\gamma}$ and then utilize  Eqs.~\eqref{eq:defpi} and \eqref{eq:ni} to obtain its elements. To find the vector $\boldsymbol{\gamma}$, we begin with constructing an orthonormal basis using the vectors  $\boldsymbol{\tilde{v}_i}$ 's. We consider three linearly independent vectors $\boldsymbol{\tilde{v}_1}-\boldsymbol{\tilde{v}_2}$, $\boldsymbol{\tilde{v}_1}-\boldsymbol{\tilde{v}_3}$ and $\boldsymbol{\tilde{v}_1}-\boldsymbol{\tilde{v}_4}$, and apply the Gram-Schmidt process as follows

\begin{eqnarray}
\boldsymbol{\hat{k}_1}=\frac{\vec{f_1}}{|\vec{f_1}|} , \quad \boldsymbol{\hat{k}_2}=\frac{\vec{f_2}}{|\vec{f_2}|}, \quad  \boldsymbol{\hat{k}_3}=\frac{\vec{f_3}}{|\vec{f_3}|},
\end{eqnarray}
where
\begin{eqnarray}
   \vec{f_1}&=& \frac{\boldsymbol{\tilde{v}_2}-\boldsymbol{\tilde{v}_1}}{\tilde{d}_{12}}, \\
   \vec{f_2}&=& (\boldsymbol{\tilde{v}_1}-\boldsymbol{\tilde{v}_3})-[(\boldsymbol{\tilde{v}_1}-\boldsymbol{\tilde{v}_3}).\boldsymbol{\hat{k}_1}]\boldsymbol{\hat{k}_1}, \nonumber \\
   \vec{f_3}&=& (\boldsymbol{\tilde{v}_1}-\boldsymbol{\tilde{v}_4})-[(\boldsymbol{\tilde{v}_1}-\boldsymbol{\tilde{v}_4}).\boldsymbol{\hat{k}_1}]\boldsymbol{\hat{k}_1}-[(\boldsymbol{\tilde{v}_1}-\boldsymbol{\tilde{v}_4}).\boldsymbol{\hat{k}_2}]\boldsymbol{\hat{k}_2}. \nonumber
\end{eqnarray}
Here, the set $\{\boldsymbol{\hat{k}_1} , \boldsymbol{\hat{k}_2},\boldsymbol{\hat{k}_3}\}$ forms an orthonormal basis. Then, we can express $\boldsymbol{\gamma}$ as
\begin{equation}
\boldsymbol{\gamma}=(\boldsymbol{\gamma}.\boldsymbol{\hat{k}_1})\boldsymbol{\hat{k}_1}+(\boldsymbol{\gamma}.\boldsymbol{\hat{k}_2})\boldsymbol{\hat{k}_2}+(\boldsymbol{\gamma}.\boldsymbol{\hat{k}_3})\boldsymbol{\hat{k}_3},
\end{equation}
To find the coefficients $\boldsymbol{\gamma}.\hat{k_i}$, we can start with $\boldsymbol{\gamma}.\boldsymbol{\hat{k}_1}$
\begin{eqnarray}
\boldsymbol{\gamma}.\boldsymbol{\hat{k}_1}&=&\boldsymbol{\gamma}.\frac{\vec{f_1}}{|\vec{f_1}|} \\
&=& \frac{1}{|\vec{f_1}|} \boldsymbol{\gamma}.(\boldsymbol{\tilde{v}_2}-\boldsymbol{\tilde{v}_1})\nonumber \\
&=&  \frac{1}{|\vec{f_1}|} (\boldsymbol{\gamma}-\boldsymbol{\tilde{v}_1}).(\boldsymbol{\tilde{v}_2}-\boldsymbol{\tilde{v}_1}) +\boldsymbol{\tilde{v}_1}.(\boldsymbol{\tilde{v}_2}-\boldsymbol{\tilde{v}_1}), \nonumber
\end{eqnarray}
By using $|\boldsymbol{\gamma}-\boldsymbol{\tilde{v}_1}|$ from Eq.~\eqref{pguess4}, we obtain
\begin{equation}
(\boldsymbol{\gamma}-\boldsymbol{\tilde{v}_1}).(\boldsymbol{\tilde{v}_2}-\boldsymbol{\tilde{v}_1})=|\boldsymbol{\gamma}-\boldsymbol{\tilde{v}_1}|\tilde{d}_{12}\cos(\delta_{\gamma\tilde{v}_1\tilde{v}_2}),
\end{equation}
where $\delta_{\gamma\tilde{v}_1\tilde{v}_2}$ is the angle between $(\boldsymbol{\gamma}-\boldsymbol{\tilde{v}_1})$ and $(\boldsymbol{\tilde{v}_1}-\boldsymbol{\tilde{v}_2})$ in triangle $\boldsymbol{\gamma}\boldsymbol{\tilde{v}_1}\boldsymbol{\tilde{v}_2}$ that can be easily obtained using the rules of sines in triangle. Then, we will have $\boldsymbol{\gamma}.\boldsymbol{\hat{k}_1}$. The same procedure can be applied to find $\boldsymbol{\gamma}.\boldsymbol{\hat{k}_2}$ and $\boldsymbol{\gamma}.\boldsymbol{\hat{k}_3}$.

\begin{example} 
As an example, let us consider the problem of four symmetric qubit states with the same purity $\zeta$ ($0<\zeta\leq 1$) and unequal a priori probabilities that are parametrized by parameters $\alpha$ and $\beta$ as $p_1=\frac{1}{4}+\alpha+\beta$, $p_2=\frac{1}{4}+\alpha-\beta$, $p_3=\frac{1}{4}-\alpha+\beta$, and $p_4=\frac{1}{4}-\alpha-\beta$ which form a polytope having a volume in the Bloch sphere, i.e. a regular tetrahedron. In this case, the states are expressed symmetrically as four points on the Bloch sphere.
The related Bloch vectors are $\boldsymbol{{v}_1}=\frac{\zeta}{\sqrt{3}}(1, 1, 1)$, $
\boldsymbol{{v}_2}=\frac{\zeta}{\sqrt{3}}(1, -1, -1)$, $\boldsymbol{{v}_3}=\frac{\zeta}{\sqrt{3}}(-1, 1, -1)$, and $\boldsymbol{{v}_4}=\frac{\zeta}{\sqrt{3}}(-1, -1, 1)$. Then $\boldsymbol{\tilde{v}_i}$'s are 
\begin{eqnarray}
\boldsymbol{\tilde{v}_1}&=&\frac{p_1\zeta}{\sqrt{3}}(1, 1, 1), \nonumber\\
\boldsymbol{\tilde{v}_2}&=&\frac{p_2\zeta}{\sqrt{3}}(1, -1, -1),\nonumber \\
\boldsymbol{\tilde{v}_3}&=&\frac{p_3\zeta}{\sqrt{3}}(-1, 1, -1), \nonumber\\
\boldsymbol{\tilde{v}_4}&=&\frac{p_4\zeta}{\sqrt{3}}(-1, -1, 1), 
\label{foursymmetricpurestates}
\end{eqnarray}

If $\alpha=\beta=0$ the states are equidistant in this example with $\tilde{d}_{ij}=\frac{\zeta}{\sqrt{6}}$ for $i\neq j$, $l_0=\tilde{d}_{12}^4$, $l_1=0$ , and $D_{123}=-3\tilde{d}_{12}^4$. After some simple calculations, using the method presented in Appendix \ref{section:Tetrahedron geometry}, the angle $\theta_{12}$ can be obtained as $\theta=\arccos(\sqrt{\frac{2}{3}})$. Putting these values in Eq. \eqref{pguess4}, we end up with $P_{\textrm{guess}}=\frac{\zeta}{2}$. This result is consistent with the result of Corollary~\ref{corollary:7} for states with equal a priori probability, as well as with the result obtained in \cite{BaeNJP2013}.
\begin{figure}[t]
\mbox{\includegraphics[scale=0.45]{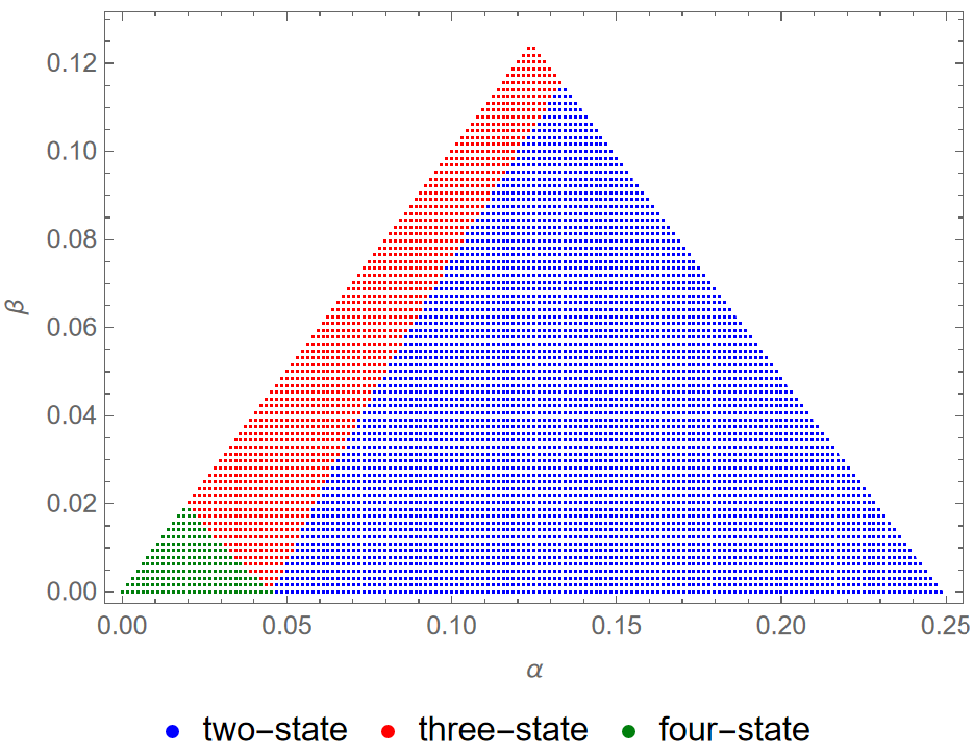}}
\caption{The simplest non-decomposable answers of four qubit states, which have the same purity $\zeta$, and different prior probabilities, which depend on the defined parameters $\alpha$ and $\beta$ in the example. When $\alpha$ and $\beta$ tend to be zero, the answer is obtained by considering all four states (green area). As $\alpha$ and $\beta$ increase, $\boldsymbol{\gamma}$ can be determined either by the two-state or three-state case, which corresponds to the blue and red areas, respectively.}
\label{fig:4states}
\end{figure}
For other cases, depending on the values of $\alpha$ and $\beta$, we can have different optimal answers, including two-element and three-element measurements. In Fig.~\ref{fig:4states}, different values of $\alpha$ and $\beta$ are shown, corresponding to the simplest non-decomposable answer. In this specific case, since four vectors in Eq. \eqref{foursymmetricpurestates} form a tetrahedron, all answers are unique.
\end{example}
Knowing the optimal answers for $N\leq4$, we have all the necessary information to solve the general problem of $N$ qubit states. For $N>4$ cases, since the optimal answer will be fulfilled with a non-decomposable optimal POVM with at most four non-null elements, we can use the guessing probabilities that were obtained for the cases of $N\leq4$. For this purpose, the guessing probability of a problem with $N$ states,$\{p_i,\rho_i\}_{i=1}^N$, can be rewritten as
\begin{equation}
P_{\textrm{guess}}^{N}=(\sum_{i=1}^{N_s}p_{i}) P_{\textrm{guess}}^S (\frac{p_i}{\sum_{i=1}^{N_s}p_{i}},\rho_i), \enspace\enspace\enspace \rho_i\in S
\end{equation} 
where $S$ is a subset of $\{\rho_i\}_{i=1}^N$ with $N_S$ elements ($N_s\leq4$). 
$P_{\textrm{guess}}^S$ in this relation can be obtained by using Eqs. \eqref{eq:gamma0lm}, \eqref{eq:gamma0lmn}, and \eqref{pguess4}.

\section{$N>4$ cases} {\label{Ncases}}
We begin this section with the following theorem:

\begin{theorem}
Consider the general problem of minimum-error discrimination involving $N$ mixed qubit states $\{p_i,\rho_i\}_{i=1}^N$. Building upon the previous discussions, we introduce the quantities $\gamma_0^{(i)}$, where $i$ corresponds to the number of non-null elements in a non-decomposable measurement. We define $\gamma_0^{(i)}$ as the maximum value within each family, representing the maximum guessing probability for each case. Thus, the overall guessing probability for the problem can be expressed as:
\begin{equation}\label{pguessNN}
P_{\textrm{guess}} = \gamma_0 = \max \{\gamma_0^{(i)}\}_{i=1}^4.
\end{equation}
\end{theorem}

\begin{proof}
The proof is straightforward. Just consider that for each $\gamma_0^{(i)}$ there is a POVM $\{\pi_j^{(i)}\}$ such that $\gamma_0^{(i)}=\sum_{j=1}^i p_j^i\Tr(\pi_j^i \rho_j^i)$ where $\{\rho_j^i\}_{j=1}^i$ are the states that contribute in obtaining the related $\gamma_0^{(i)}$. Therefore a $\gamma_0^{(j)}$ smaller than \eqref{pguessNN} cannot be an optimal answer as there will always be another POVM that yields a higher value for the guessing probability, contradicting optimality.
\end{proof}
Therefore, we have all we need to solve a general problem of $N$ qubit states.
\emph{To simplify this task and save time, we have developed a simple code using Mathematica. This code, based on our four-step instruction, is designed with the primary task of searching for the simplest non-decomposable answer, which includes $\boldsymbol{\gamma}$ and $P_{\textrm{guess}}$, as well as the detectable states for that answer \cite{RouhbakhshGithub2022}.}
\begin{figure}[t]
\mbox{\includegraphics[scale=0.44]{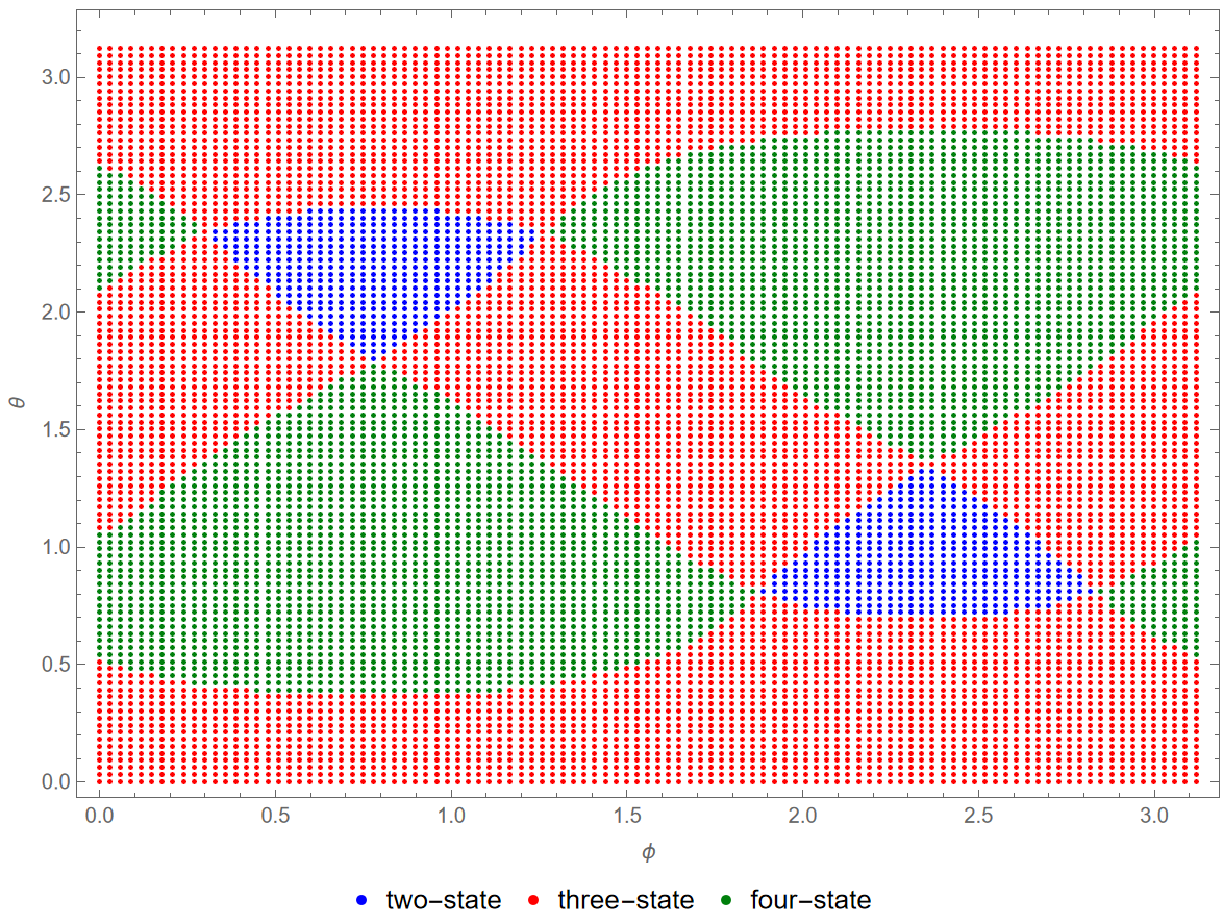}}
\caption{The simplest non-decomposable answers for five qubit states, all possessing the same purity $\zeta$. Four of them are equiprobable symmetric states ($p=0.195$), and one is a disturbing state ($p_5=\delta=0.22$), which rotates on a sphere with radius $\zeta$ and polar angles $\{\phi, \theta\}$. In this situation, the three-state case reveals $\boldsymbol{\gamma}$ in most areas.}
\label{fig:5states}
\end{figure}
\section{Examples} \label{section:examples}
To show the strength of our work to solve a general problem of $N$ mixed qubit states with arbitrary prior probabilities, we provide some examples using the Mathematica code. we first consider an example of a five-qubit state consisting of four equiprobable symmetric qubit states \eqref{foursymmetricpurestates} ($p_1=p_2=p_3=p_4=p$), and one more state $v_5=\zeta(\sin{\theta}\cos{\phi},\sin{\theta}\sin{\phi}, \cos{\theta})$ with a different priori probability $\delta=1-4p$. Our goal is to analyze how this state affects the optimal answer of the four symmetric states, for which we already know the optimal answer is a four-state case when $\delta$ is small. For $\delta\leq 0.2$, it can be observed that the only optimal answer is obtained through a four-element measurement. Thus, the optimal measurement remains the same as the problem of four equiprobable symmetric qubit states with the same purity, i.e., $\boldsymbol{\gamma}=0$, and is not disturbed by the new state. However, in this case, the guessing probability decreases as $\delta$ increases ($P_{\textrm{guess}}=2p\zeta=\zeta\frac{1-\delta}{2}$). When $\delta$ surpasses this threshold ($\delta > 0.2$), other cases will emerge. Fig.~\ref{fig:5states} illustrates the situation for $\delta=0.22$ and $p=0.195$. Based on the location of the state $\rho_5$ on the Bloch sphere, all three cases can happen. As $\delta$ increases, the region in which a four-state case is possible diminishes and eventually disappears. Consequently, $\boldsymbol{\gamma}$ can be determined using either a two-state or three-state case. Furthermore, for higher values of $\delta$, a two-state case will give the optimal answer.

To proceed, we first reassess all five states from the previous example. Then, we change the third and fifth states. The spherical coordinates of state $\rho_3$ are given by $\{3\pi/4, \arccos{(-1/\sqrt{3})}\}$. We replace its azimuthal angle with a variable $\omega$. Additionally, we can set the polar angle of $\rho_5$ to $\pi/3$. Their priori probabilities are unchanged as before. Consequently, there are three fixed states, with two rotating states on circles. The answers for $0<\phi, \omega<2\pi$ are shows Fig.~\ref{fig:5states2}.
\begin{figure}[t]
\mbox{\includegraphics[scale=0.44]{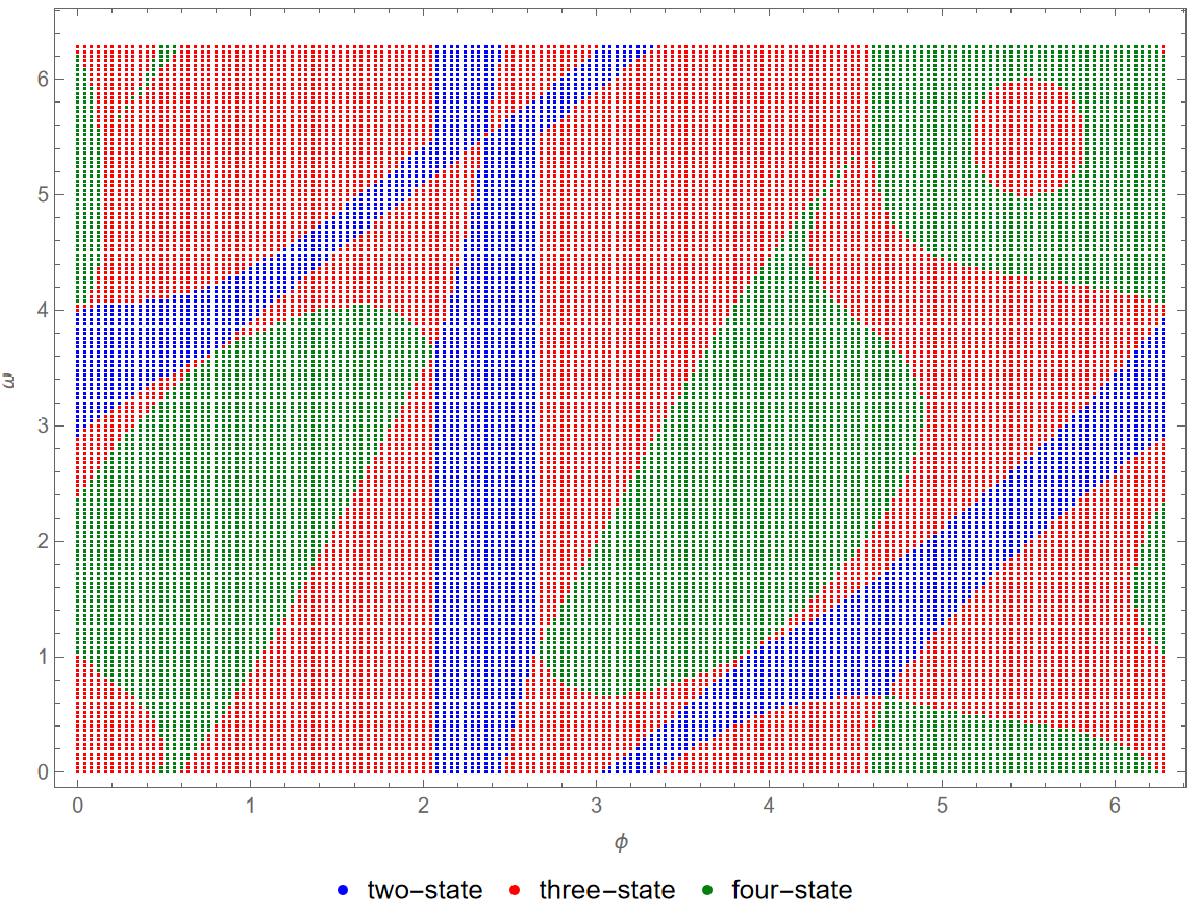}}
\caption{The simplest non-decomposable answers for five qubit states, which are defined by: $v_1=\zeta/\sqrt{3}(1, 1, 1)$, $v_2=\zeta/\sqrt{3}(1,- 1,- 1)$ , $v_3=\zeta/\sqrt{3}(\sin{a}\cos{\omega},\sin{a}\sin{\omega},\cos{a})$, $v_4=\zeta/\sqrt{3}(-1,- 1, 1)$, and $v_5=\zeta/2(\sqrt{3}\cos{\phi}, \sqrt{3}\sin{\phi}, 1)$ with $a=\arccos{(-1/\sqrt{3})}$. All states are mixed states with the same purity $\zeta$. Four states have the same priori probabilities $p_1=p_2=p_3=p_4=0.195$ (with $p_5=1-\sum_{i\neq 5} p_i=0.22$). We have two rotating states with different priori probabilities: 1- The state $\rho_3$ has a fixed polar angle $a$ and rotating azimuthal angle $\omega$. 1- The state $\rho_5$ has a fixed polar angle $\pi/3$ and rotating azimuthal angle $\phi$. In this case, the azimuthal angles change from zero to $2\pi$.}
\label{fig:5states2}
\end{figure}\\

For another example, we consider a more general case of six qubit states with no symmetries or equal priori probabilities. We denote the Bloch vectors of these states as
\begin{eqnarray}
\{p_1,\boldsymbol{v_1}\} &=& \{0.23, 0.85(\sin{\theta}\cos{\phi}, \sin{\theta}\sin{\phi}, \cos{\theta})\} \nonumber\\
\{p_2,\boldsymbol{v_2}\} &=& \{0.22, (-0.25, 0.64, 0.20)\} \nonumber\\
\{p_3,\boldsymbol{v_3}\} &=& \{0.20, (0.45, -0.50, 0.55)\} \nonumber\\
\{p_4,\boldsymbol{v_4}\} &=& \{0.19, (0.75, -0.25, -0.42)\} \nonumber\\
\{p_5,\boldsymbol{v_5}\} &=& \{0.10, (-0.50, 0.28, -0.57)\} \nonumber\\
\{p_6,\boldsymbol{v_6}\} &=& \{0.06, (0.33, 0.45, -0.82)\}
\label{sixarbitraryqubits}
\end{eqnarray}
\begin{figure}[t]
\mbox{\includegraphics[scale=0.51]{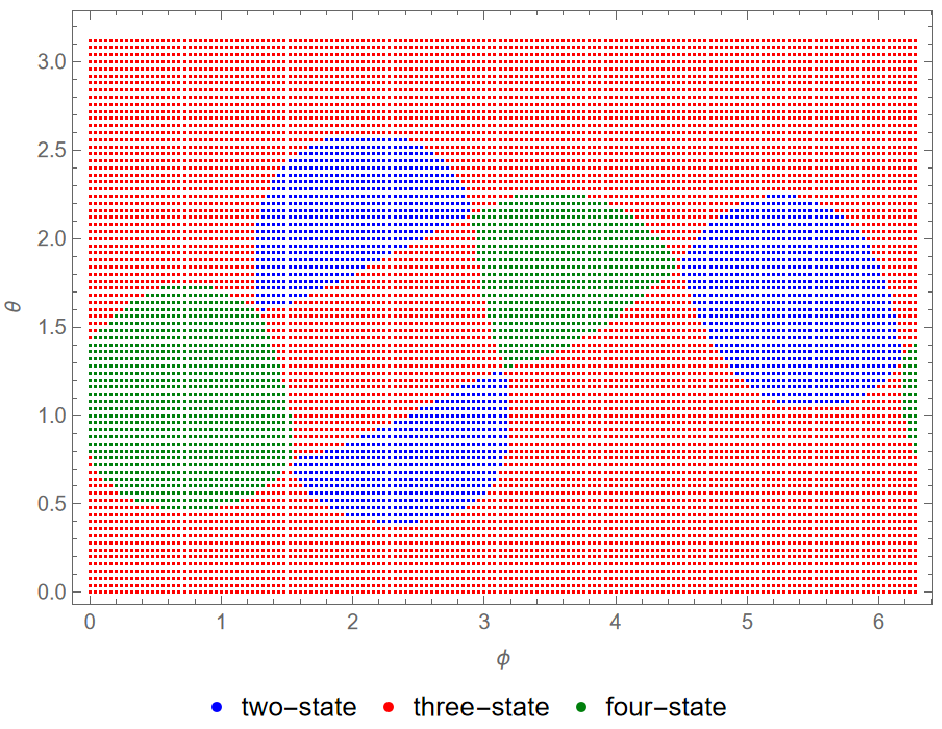}}
\caption{The simplest non-decomposable answers for six qubit states of Eq.~\eqref{sixarbitraryqubits}. \emph{No symmetry is considered!}. All the prior probabilities and purities of these states are different. In this case, the state $\rho_1$ is free to rotate on a sphere of radius 0.85 with spherical angles $\{\phi, \theta\}$.}
\label{fig:6states}
\end{figure}
The first qubit state is characterized by two spherical angles, $\phi$ and $\theta$, allowing it to freely rotate on a sphere with a radius of $0.85$. Figure~\ref{fig:6states}  illustrates the different regions where the $\boldsymbol{\gamma}$ can be obtained. Based on this figure, a three-element measurement will fulfill the guessing probability and there is no need to go to the next step to find $\boldsymbol{\gamma}$. Moreover, the regions for four-element measurements are relatively small. However, this does not imply the absence of optimal four-element measurements for the remaining regions where two-state and three-state solutions exist. For instance, for $\theta=1.585948$ and $\phi=1.288376$, there are three possible optimal measurements with common $\boldsymbol{\gamma}=(0.070525, 0.057738, 0.048073)$ and $P_{\textrm{guess}}=0.370574$. Two of these measurements, which are non-decomposable answers, can be obtained by considering two three-state sets: $\{\rho_1, \rho_2, \rho_3\}$ and $\{\rho_1, \rho_3, \rho_4\}$. The third one, which is decomposable, can be obtained using the four states: $\{\rho_1, \rho_2, \rho_3, \rho_4\}$. It should be emphasized that the optimal answer in the green area cannot be achieved using either two-element or three-element measurements.

As the last example, let us consider the special case of $N$ qubit states $\rho_i$ with equal a priori probabilities, i.e., $p_i=1/N$ for $i=1,\cdots, N$. It is shown in appendix \ref{Qubit states: Equal priori probabilities} that to achieve $P_{\textrm{guess}}$, we need to choose a sphere with a maximum number of states lying on. The sphere is referred to as the \emph{circumsphere}, which is simply the smallest possible sphere that encompasses all the $\boldsymbol{v}_i$'s (the same result was also obtained in \cite{BaePRA2013} using a different approach). The circumsphere is characterized by $\{R, \boldsymbol{O}\}$, where $R=|\boldsymbol{v}_i-\boldsymbol{O}|$ is its radius, and $\boldsymbol{O}$ represents its  circumcenter (Fig.~\ref{fig:circumsphere}). Considering that $0 < R \leq 1$ and $P_{\textrm{guess}}=\gamma_0$, we can derive
\begin{equation} \label{eq:defr}
\gamma_0=\frac{1}{N}\left( 1+R \right),
\end{equation}
and
\begin{equation} \label{eq:pguesslimits}
\frac{1}{N}<P_{\textrm{guess}}\leq\frac{2}{N}.
\end{equation}
Moreover, Eq.~\eqref{eq:ni} reduces to
\begin{equation} \label{eq:nieq}
\boldsymbol{\hat{n}_i}=\frac{\boldsymbol{v}_i-\boldsymbol{O}}{|\boldsymbol{v}_i-\boldsymbol{O}|}.
\end{equation}
A more detailed discussion on the case of equal a priori probabilities can be found in Appendix \ref{Qubit states: Equal priori probabilities}.

\begin{figure}
\mbox{\includegraphics[scale=0.35]{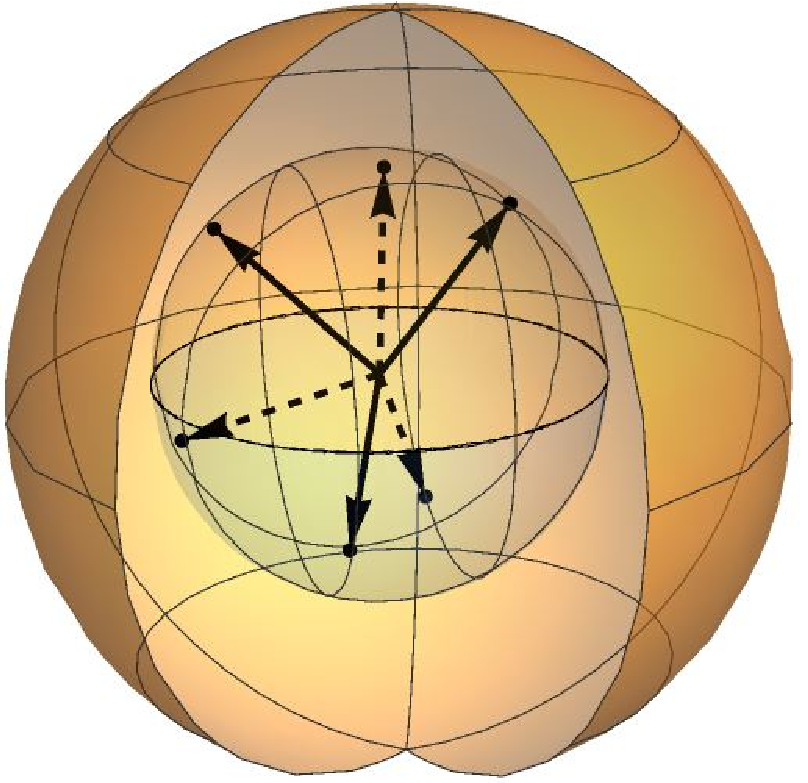}}
\caption{Geometrical representation of the circumsphere (interior sphere), which is inside the Bloch sphere (exterior sphere).}
\label{fig:circumsphere}
\end{figure}

\section{Conclusion} \label{section:conclusion}
In this paper, we have revisited the problem of minimum-error discrimination for mixed qubit states, aiming to obtain discrimination parameters in a constructive way by employing the necessary and sufficient Helstrom condition. Our approach is based on representing qubit states using Bloch vectors. For arbitrary priori probabilities, each pair of qubit states forms a hyperbola, and the desired $\boldsymbol{\gamma}$ lies on a part of this hyperbola close to the more probable state.

Through these tools, we introduced an instruction to find the Lagrange operator $\Gamma$, allowing us to identify all optimal POVM measurements. We also explored properties of the POVM answers, involving the geometry of the polytope of qubit states within the Bloch sphere, and introduced classes of answers, such as unchanged guessing probability and unchanged measurement operators.

We show that for an optimal strategy, there might be some states that cannot be detectable, meaning that their associated POVM elements are null. So, in the problem of ME discrimination of $N$ qubits $\{\rho_i\}_{i=1}^N$ some states might be undetectable. 

Additionally, we demonstrated that every POVM set $M$ can be divided into a limited number of non-decomposable POVM subsets, $E'$. These subsets offer an alternative approach to constructing a general ME answer to the given problem, which may prove practical when certain states' detection is not a priority, particularly in scenarios where preparing measurement operators is expensive.

To illustrate our proposed instruction, we provided solutions for specific cases of $N\leq4$. For the two-state case, the Helstrom formula is already known. In the three-state case, we reduced the problem to that of three-qubit states on the $x-z$ plane, enabling a comprehensive analysis. Furthermore, we addressed the specific case of trine states with arbitrary priori probabilities, corroborating previous findings. Finally, we solved the four-state qubit case for the first time, utilizing the geometric properties of a tetrahedron and the intersections among $\binom{4}{2}=6$ hyperbolas derived from each pair of states.

Furthermore, we applied our instruction to solve examples for cases with $N\geq4$, including instances of five- and six-qubit states with non-equal priori probabilities and a general case of $N$ states with equal prior probabilities. In the latter case, finding $\Gamma$ corresponds to identifying a sphere with a maximum number of states on it.
\section*{Note on similar work}
\addcontentsline{toc}{section}{Note on Similar Work}
\textit{Subsequent to the submission of this paper, a separate study addressing a subsection of the problem discussed in our work (Section\ref{subsection:Four States}) was published \cite{HaQIP2023}, employing a distinctly different approach. It is noteworthy that both works were conducted independently.}
\acknowledgments
We would like to thank S. J. Akhtarshenas for many fruitful discussions. This work was supported by projects APVV-18-0518 (OPTIQUTE), VEGA 2/0161/19 (HOQIP). AG was further supported by funding from QuantERA, an ERA-Net cofund in Quantum Technologies, under the project eDICT.
\section*{Data availability} 
The data generated and analyzed during the current study are available on GitHub \cite{RouhbakhshGithub2022}, as well as from the corresponding author upon reasonable request.
\section*{Declaration}
\textbf{Conflict of interest}. The authors have no competing interests to declare that are relevant to the content of this article.

\appendix
\section{The rotation matrix for three qubit-state discrimination} \label{section:The Rotation Matrix}
Any qubit state $\rho_i$ can be identified with a point inside the Bloch sphere using its Bloch vector $\boldsymbol{v}_i$. The same is true for the multiplication of the state by its prior probability, i.e., $\tilde{\rho}_i=p_i\rho_i$ and $\boldsymbol{\tilde{v}_i}=p_i \boldsymbol{v}_i$. Three states with their corresponding priori probabilities, represent three points inside the Bloch sphere. From geometry, we know that three non-collinear points determine a plane, and each plane can be described by its normal vector, which is a vector orthogonal to the plane (i.e., orthogonal to every directional vector of the plane). By having the normal vector at hand, in the next step, we can find the rotation matrix that aligns this vector in the $y$-direction. This rotation matrix then rotates each $\boldsymbol{\tilde{v}_i}$ in a way that the y-component of all $\boldsymbol{\tilde{v}_i}$'s becomes equal. With a translation along the $y$-axis, we can eliminate the $y$-components of these rotated vectors. Finally, with an additional rotation in the $x-z$ plane, we can rotate $\boldsymbol{\tilde{v}_i}$'s in a way that the state with the largest priori probability is aligned in the $z$-direction. Since these rotations and the translation do not affect the relative distances and angles between states, the guessing probability for this new set of states is equal to the original one. The POVMs can also be easily related through a rotation, as explained in Sect.~\ref{subsection:Three States}.

To obtain corresponding rotation matrix, consider three points $P_1$,$P_2$ and $P_3$ in the Bloch sphere
\begin{equation}
P_1=
\begin{pmatrix}
x_1 \\
y_1 \\
z_1
\end{pmatrix} ,
P_2=
\begin{pmatrix}
x_2 \\
y_2 \\
z_2
\end{pmatrix} ,
P_3=
\begin{pmatrix}
x_3 \\
y_3 \\
z_3
\end{pmatrix} .
\end{equation}
The plane containing these three points is defined by
\begin{equation}
a x+b y+ c z + d =0  ,
\end{equation}
where the coefficients $a$, $b$ and $c$ determine the components of normal vector $\boldsymbol{n}$
\begin{equation}
\boldsymbol{n}=
\begin{pmatrix}
a \\
b \\
c
\end{pmatrix} .
\end{equation}
Since the normal vector $\boldsymbol{n}$ is the unit vector orthogonal to every directional vector of the plane, it can be obtained using the following equation
\begin{equation} \label{eq:normaln}
\boldsymbol{\hat{n}}=\frac{(P_2-P_1) \times (P_3-P_2)}{|(P_2-P_1) \times (P_3-P_2)|} .
\end{equation}

Our purpose is to rotate this plane such that the unit vector $\boldsymbol{\hat{n}}$ aligns with vector $\boldsymbol{\hat{j}}$
\begin{equation}
\boldsymbol{\hat{j}}=
\begin{pmatrix}
0 \\
1 \\
0
\end{pmatrix}.
\end{equation}
The desired rotation can be obtained by following this instruction:\\
Firstly, let us define the vector $\boldsymbol{V}$ and matrix $A$ as
\begin{equation}
\boldsymbol{V}=\boldsymbol{\hat{n}} \times \boldsymbol{\hat{j}},
\end{equation}
and
\begin{equation}
A=
\begin{pmatrix}
0 & - V_z & 0 \\
V_z & 0 & - V_x \\
0 & V_x & 0
\end{pmatrix}
\end{equation}
Then the rotation Matrix can be written as
\begin{equation}
R= \mathbb{1} + A + (\frac{1-c}{s^2})A^2 ,
\end{equation}
where the coefficients $c$ and $s$ are
\begin{eqnarray}
s&=&||V||=\sqrt{V_{x}^2+V_{y}^2+V_{z}^2}, \nonumber\\
c&=&\boldsymbol{\hat{n}}\cdot\boldsymbol{\hat{j}}. \nonumber
\end{eqnarray}
A more straightforward approach for our specific problem is using the following formulation for the rotation matrix $R$
\begin{equation}
R=
\begin{pmatrix}
1-\frac{n_x^2}{1+n_y} & -n_x & -\frac{n_x n_y}{1+n_y} \\
n_x & n_y & n_z \\
-\frac{n_x n_y}{1+n_y} & -n_z & 1-\frac{n_x^2}{1+n_z}
\end{pmatrix}.
\end{equation}
where $\boldsymbol{\hat{n}}=(n_x, n_y, n_z)$ is defined in Eq.~\eqref{eq:normaln}.\\

\section{Tetrahedron geometry for the problem of four-state qubit discrimination} \label{section:Tetrahedron geometry}
In this section, we briefly review the material needed for solving the problem of four qubit states discrimination in Sect.~(\ref{subsection:Four States}). As previously mentioned, four qubits that are not on a plane form a tetrahedron in the Bloch sphere. A tetrahedron consists of four triangular faces, six straight edges, and four vertices. There are three types of angles for a tetrahedron: 12 face angles which are the regular angles of each triangle; 6 dihedral angles associated with each edge of the tetrahedron, which represent the angles between each pair of connected faces; and 4 solid angles for each vertex of the tetrahedron. For the problem of four qubit states, to make these quantities clearer, let us define the following quantities for the lengths of edges and dihedral angles:\\
$\tilde{d}_{ij}$: the length of the edge connecting two qubits $\boldsymbol{\tilde{v}_{i}}$ and $\boldsymbol{\tilde{v}_{j}}$ in the Bloch sphere. \\
$\theta_{ij}$: the dihedral angle between two adjacent faces connected by the edge $ij$. \\
By using these quantities, a relation can be established for the dihedral angle $\theta_{ij}$ \cite{WirthJMC2014} 
\begin{equation}
\cos(\theta_{ij})=\frac{D_{ij}}{\sqrt{D_{ijk}{D_{ijl}}}},
\label{dihedralangle}
\end{equation}
where 
\begin{widetext}
\begin{eqnarray}
D_{ij}&=&-\tilde{d}_{ij}^4+(\tilde{d}_{ik}^2+\tilde{d}_{il}^2+\tilde{d}_{jk}^2+\tilde{d}_{jl}^2-2\tilde{d}_{kl}^2)\tilde{d}_{ij}^2 
+(\tilde{d}_{ik}^2-\tilde{d}_{jk}^2)(\tilde{d}_{jl}^2-\tilde{d}_{il}^2), \nonumber \\
D_{ijk}&=& 
-(\tilde{d}_{ij}+\tilde{d}_{ik}+\tilde{d}_{jk})(\tilde{d}_{ij}+\tilde{d}_{ik}-\tilde{d}_{jk})(\tilde{d}_{jk}+\tilde{d}_{ij}-\tilde{d}_{ik})(\tilde{d}_{ik}+\tilde{d}_{jk}-\tilde{d}_{ij}).
\end{eqnarray}
\end{widetext}
To find the guessing probability for qubit states when all states are detectable in an optimal strategy, $\boldsymbol{\gamma}$ must be an interior point within the tetrahedron. By connecting it with straight lines to each vertex, we obtain five tetrahedrons. Let us consider two tetrahedrons: $T_1=\boldsymbol{\gamma}\boldsymbol{\tilde{v}_{1}}\boldsymbol{\tilde{v}_{2}}\boldsymbol{\tilde{v}_{3}}$ and $T_2=\boldsymbol{\gamma}\boldsymbol{\tilde{v}_{1}}\boldsymbol{\tilde{v}_{2}}\boldsymbol{\tilde{v}_{4}}$. The dihedral angles $\theta_{12}^{T_1}$ and $\theta_{12}^{T_2}$ correspond to the common edge $\boldsymbol{\tilde{v}_{1}}\boldsymbol{\tilde{v}_{2}}$ in these two tetrahedrons. Therefore, there exists a simple relation between these angles and the dihedral angle $\alpha$ for the edge $\boldsymbol{\tilde{v}_1}\boldsymbol{\tilde{v}_{2}}$ of the main tetrahedron $\boldsymbol{\tilde{v}_{1}}\boldsymbol{\tilde{v}_{2}}\boldsymbol{\tilde{v}_{3}}\boldsymbol{\tilde{v}_{4}}$.
\begin{equation}
\alpha=\theta_{12}^{T_1}+\theta_{12}^{T_2}.
\end{equation}
Using \eqref{dihedralangle} for tetrahedron $T_1$, we have
\begin{eqnarray}
\cos(\theta_{12}^{T_1})&=&\frac{D_{12}^{T_1}}{\sqrt{D_{123}^{T_1}{D_{12\gamma}^{T_1}}}} \nonumber
\end{eqnarray}
where $D_{12}^{T_1}=l_0+l_1|\boldsymbol{\gamma}-\boldsymbol{\tilde{v}_1}|$. The parameters $l_0$ and $l_1$ were defined in \eqref{l0l1}. After performing some algebraic calculations, we end up with the following equation
\small
\begin{widetext}
\begin{eqnarray}
|\boldsymbol{\gamma}-\boldsymbol{\tilde{v}_1}|=\enspace\enspace\enspace\enspace \enspace\enspace\enspace\enspace\enspace\enspace\enspace\enspace \enspace\enspace\enspace\enspace\enspace\enspace\enspace\enspace \enspace\enspace\enspace\enspace\enspace\enspace\enspace\enspace \enspace\enspace\enspace\enspace\enspace\enspace\enspace\enspace \enspace\enspace\enspace\enspace\enspace\enspace\enspace\enspace \enspace\enspace\enspace\enspace\enspace\enspace\enspace\enspace \enspace\enspace\enspace\enspace\enspace\enspace\enspace\enspace \enspace\enspace\enspace\enspace\enspace\enspace\enspace\enspace \enspace\enspace\enspace\enspace\enspace\enspace\enspace\enspace \enspace\enspace\enspace\enspace\enspace\enspace\enspace\enspace \enspace\enspace\enspace\enspace\enspace\enspace\enspace\enspace \enspace\enspace\enspace\enspace\enspace\enspace\enspace\enspace \enspace\enspace\enspace\enspace\enspace\enspace\enspace\enspace \enspace\enspace\enspace\enspace \\
\frac{-2D_{123}p_{12}(\tilde{d}_{12}^2-p_{12}^2) \cos^2(\theta_{12}^{T_1})-l_0 l_1-\cos(\theta_{12}^{T_1})\sqrt{\bigl(4\cos^2(\theta_{12}^{T_1})D_{123}(\tilde{d}_{12}^2-p_{12}^2)\tilde{d}_{12}^2+l_1^2(\tilde{d}_{12}^2-p_{12}^2)+4p_{12}l_0 l_1-4 l_0^2\bigr)D_{123}(\tilde{d}_{12}^2-p_{12}^2)}}{4 D_{123}(\tilde{d}_{12}^2-p_{12}^2) \cos^2(\theta_{12}^{T_1})+l_1^2}, \enspace\enspace\enspace\enspace \enspace\enspace\enspace\enspace  , \nonumber
\end{eqnarray}
Similarly, for tetrahedron $T_2$ one can write
\begin{eqnarray}
|\boldsymbol{\gamma}-\boldsymbol{\tilde{v}_1}|=\enspace\enspace\enspace\enspace \enspace\enspace\enspace\enspace\enspace\enspace\enspace\enspace \enspace\enspace\enspace\enspace\enspace\enspace\enspace\enspace \enspace\enspace\enspace\enspace\enspace\enspace\enspace\enspace \enspace\enspace\enspace\enspace\enspace\enspace\enspace\enspace \enspace\enspace\enspace\enspace\enspace\enspace\enspace\enspace \enspace\enspace\enspace\enspace\enspace\enspace\enspace\enspace \enspace\enspace\enspace\enspace\enspace\enspace\enspace\enspace \enspace\enspace\enspace\enspace\enspace\enspace\enspace\enspace \enspace\enspace\enspace\enspace\enspace\enspace\enspace\enspace \enspace\enspace\enspace\enspace\enspace\enspace\enspace\enspace \enspace\enspace\enspace\enspace\enspace\enspace\enspace\enspace \enspace\enspace\enspace\enspace\enspace\enspace\enspace\enspace \enspace\enspace\enspace\enspace\enspace\enspace\enspace\enspace \enspace\enspace\enspace\enspace \\
\frac{-2D_{124}p_{12}(\tilde{d}_{12}^2-p_{12}^2) \cos^2(\theta_{12}^{T_2})-l'_0 l'_1-\cos(\theta_{12}^{T_2})\sqrt{\bigl(4\cos^2(\theta)D_{124}(\tilde{d}_{12}^2-p_{12}^2)\tilde{d}_{12}^2+{l'}_1^2(\tilde{d}_{12}^2-p_{12}^2)+4p_{12}l'_0 l'_1-4 {l'}_0^2\bigr)D_{124}(\tilde{d}_{12}^2-p_{12}^2)}}{4 D_{124}(\tilde{d}_{12}^2-p_{12}^2) \cos^2(\theta_{12}^{T_2})+{l'}_1^2}, \enspace\enspace\enspace\enspace \enspace\enspace\enspace\enspace,  \nonumber
\end{eqnarray}
\end{widetext}
\normalsize
where $l'_0$ and $l'_1$ can be obtained from $l_0$ and $l_1$ by replacing $3 \leftrightarrow 4$. Therefore, by equating these two equations and using the relation $\theta_{12}^{T_2}=\alpha-\theta_{12}^{T_1}$, the angle $\theta_{12}^{T_1}$ can be obtained. For example, in the problem of four symmetric qubit states in section \ref{subsection:Four States}, we can easily obtain that $\theta_{12}^{T_1}=\theta_{12}^{T_2}=\frac{\alpha}{2}$.

\section{Qubit states: Equal priori probabilities} \label{Qubit states: equal priori probabilities}
In the case of equal priori probabilities, Eqs. \eqref{eq:gamma0-p}  and \eqref{eq:gamma0-pgeq}  reduce to

\begin{equation} \label{eq:gamma0-1/N}
|\boldsymbol{v}_i-\boldsymbol{O}|=N\gamma_0-1,
\end{equation}
for $i=1,\cdots,M$ and
\begin{equation} \label{eq:gamma0-1/N2}
|\boldsymbol{v}_i-\boldsymbol{O}|< N\gamma_0-1,
\end{equation}
for $i=M+1,\cdots,N$, respectively,  where $\boldsymbol{O}=N\boldsymbol{\gamma}$.

These equations admit geometrical interpretation; Eq. \eqref{eq:gamma0-1/N} defines a sphere with radius $R=N\gamma_0-1$ centered at the point $\boldsymbol{O}$. All Bloch vectors, $\boldsymbol{v}_i$'s, are embedded within this sphere. Specifically, $M$ Bloch vectors $\boldsymbol{v}_i$'s with $i=1,\cdots,M$ lie on the surface of the sphere, while the remaining $N-M$ Bloch vectors $\boldsymbol{v}_i$'s with $i=M+1,\cdots,N$ are located inside the sphere (see Fig.~\ref{fig:circumsphere}). As we mentioned previously, the measurement operators associated with the former are given by Eq.~\eqref{eq:defpi}. However, for the latter Bloch vectors, we have $\pi_i=0$, indicating that their corresponding states do not appear to have any output in the discrimination process.

\begin{lemma}
In the optimal measurement, the number of states satisfying  Eq.~\eqref{eq:gamma0-1/N} is maximum.
\end{lemma}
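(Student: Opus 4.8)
The plan is to read the assertion as: among all circumsphere configurations that could \emph{a priori} correspond to an optimal measurement, the genuine optimum carries the largest number of Bloch vectors on its surface; equivalently, there is no ``admissible'' circumsphere on which strictly more than $M$ of the $\boldsymbol{v}_i$ lie. I would prove this by combining the uniqueness of the optimal Lagrange operator $\Gamma$ (recalled in Sec.~\ref{s1}) with a converse to the analysis of Sec.~\ref{section-3}: namely, that any admissible configuration automatically satisfies the full Helstrom conditions and is therefore optimal.

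First I would make the notion of an admissible configuration precise. Call a pair $(\boldsymbol{c},r)$ with $r\ge 0$ admissible if (i) it encloses every Bloch vector, $|\boldsymbol{v}_i-\boldsymbol{c}|\le r$ for all $i$, and (ii) $\boldsymbol{c}$ lies in the convex hull of those $\boldsymbol{v}_i$ with $|\boldsymbol{v}_i-\boldsymbol{c}|=r$. By \eqref{eq:ni}, condition (ii) is precisely the statement that the completeness relations \eqref{eq:sigmaalpha} can be solved with nonnegative weights $\alpha_i$ supported on the boundary indices; moreover each such weight is automatically $\le 1$, since if some boundary point $\boldsymbol{v}_k$ carried weight $>1/2$ one would get $|\boldsymbol{c}-\boldsymbol{v}_k|<r$, contradicting that $\boldsymbol{v}_k$ lies on the boundary. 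By \eqref{eq:gamma0-pgeq} and \eqref{eq:sigmaalpha}, the optimal circumsphere $(\boldsymbol{O},R)$ of Eq.~\eqref{eq:gamma0-1/N}, with $\boldsymbol{O}=N\boldsymbol{\gamma}$ and $R=N\gamma_0-1$, is admissible, and it has exactly $M$ boundary points.

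The main step is to show that \emph{every} admissible $(\boldsymbol{c},r)$ is optimal. Given such a pair, I would set $\gamma_0=(r+1)/N$, $\boldsymbol{\gamma}=\boldsymbol{c}/N$, $\Gamma=\tfrac12(\gamma_0\mathbb{1}+\boldsymbol{\gamma}\cdot\boldsymbol{\sigma})$, $\hat{\boldsymbol{n}}_i=(\boldsymbol{v}_i-\boldsymbol{c})/r$ on the boundary, and $\pi_i=\tfrac{\alpha_i}{2}(\mathbb{1}+\hat{\boldsymbol{n}}_i\cdot\boldsymbol{\sigma})$ on the boundary with $\pi_i=0$ otherwise. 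Enclosure gives $\Gamma-\tilde{\rho}_i\ge 0$, i.e.\ \eqref{HC1}/\eqref{gamma}; the choice of $\hat{\boldsymbol{n}}_i$ makes $\textrm{supp}(\pi_i)=\textrm{ker}(\Gamma-\tilde{\rho}_i)$, i.e.\ \eqref{pi0}; and \eqref{eq:sigmaalpha} gives $\sum_i\pi_i=\mathbb{1}$ with $\pi_i\ge0$. The only nontrivial check is that the $\Gamma$ just written actually equals $\sum_i\tilde{\rho}_i\pi_i$: using $\tilde{\boldsymbol{v}}_i=(\gamma_0-p_i)\hat{\boldsymbol{n}}_i+\boldsymbol{\gamma}$ and the resulting identity $\tilde{\boldsymbol{v}}_i\times\hat{\boldsymbol{n}}_i=\boldsymbol{\gamma}\times\hat{\boldsymbol{n}}_i$, the antisymmetric part of $\sum_i\tilde{\rho}_i\pi_i$ is proportional to $\boldsymbol{\gamma}\times\sum_i\alpha_i\hat{\boldsymbol{n}}_i=0$, and the symmetric part collapses, again by \eqref{eq:sigmaalpha}, to $\tfrac12(\gamma_0\mathbb{1}+\boldsymbol{\gamma}\cdot\boldsymbol{\sigma})$. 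Hence all Helstrom conditions hold and $(\boldsymbol{c},r)$ is optimal.

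To conclude, I would argue by contradiction: if some admissible configuration had $M'>M$ boundary points, then by the previous step it is optimal, hence (uniqueness of $\Gamma$) it coincides with $(\boldsymbol{O},R)$, forcing $M'=M$ — a contradiction. Therefore the optimal circumsphere carries the maximal number of states satisfying \eqref{eq:gamma0-1/N}. I expect the delicate point to be exactly the converse in the main step — verifying that the operator assembled from $(\boldsymbol{c},r)$ and the $\alpha_i$ is the genuine Lagrange operator $\sum_i\tilde{\rho}_i\pi_i$, rather than merely some $\Gamma$ dominating each $\tilde{\rho}_i$ — since everything else is either the cited uniqueness of $\Gamma$ or a rearrangement of the identities already established in Sec.~\ref{section-3}.
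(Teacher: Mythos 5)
Your proof is correct, but it takes a genuinely different route from the paper's. The paper's own proof is a two-line variational comparison: it writes $P_{\textrm{guess}}^{M}$ as a maximization over POVMs whose nonzero elements are restricted to the $M$ states obeying Eq.~\eqref{eq:gamma0-1/N}, notes that the analogous quantity with only $M-1$ admitted outcomes is the same maximization with the extra constraint $\pi_M=0$, and concludes $P_{\textrm{guess}}^{M-1}\le P_{\textrm{guess}}^{M}$, so that discarding contact states can never help. You instead prove a converse-type statement: any sphere enclosing all the $\boldsymbol{v}_i$ whose center lies in the convex hull of its contact points yields, through Eqs.~\eqref{eq:defpi}, \eqref{eq:ni} and \eqref{eq:sigmaalpha}, a POVM together with an operator satisfying both Helstrom conditions \eqref{gamma} and \eqref{pi0} (your ``delicate point'' that this operator equals $\sum_i\tilde{\rho}_i\pi_i$ is in fact automatic: summing \eqref{pi0} over $i$ and using completeness gives it at once, though your explicit symmetric/antisymmetric check is also fine), and you then invoke the uniqueness of the Lagrange operator cited in the introduction \cite{BarnettBook2009} to conclude that the admissible sphere is unique, coincides with the optimal one, and hence no sphere compatible with optimality carries more contact points. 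Your argument is heavier---it leans on the external uniqueness of $\Gamma$ and on the rank-one structure established in Sec.~\ref{section-2}---but it buys more: it identifies the optimal sphere as the \emph{unique} admissible one, i.e.\ the minimal enclosing sphere, which is precisely the geometric claim the paper asserts immediately after the lemma but does not actually derive from its short monotonicity proof; the paper's argument, in exchange, is shorter, self-contained, and purely variational.
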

\begin{proof}
We suppose that at most $M$ states can satisfy Eq.~\eqref{eq:gamma0-1/N}. Without loss of generality, we label these states from $1$ to $M$. Then, the guessing probability can be written as
\begin{equation}
P_{\textrm{guess}}^{M}=\max_{\pi_1,\cdots,\pi_M}  \sum\limits_{i=1}^M \Tr(\tilde{\rho}_i\pi_{i}) , \enspace\enspace\enspace\enspace \sum_{i=1}^M \pi_i=\mathbb{1}.
\end{equation}
Now, we consider a second strategy where at most $M-1$ states satisfy Eq.~\eqref{eq:gamma0-1/N}. Denoting the corresponding measurement operators by $\pi^\prime_i$ ($i=1,\cdots,M-1$), the guessing probability is obtained by
\begin{equation}
P_{\textrm{guess}}^{M-1}=\max_{\pi_1^\prime,\cdots,\pi_{M-1}^\prime}  \sum\limits_{i=1}^{M-1} \Tr(\tilde{\rho}_i\pi_{i}^\prime), \enspace\enspace\enspace\enspace \sum_{i=1}^{M-1} \pi_i^\prime=\mathbb{1}.
\end{equation}
Clearly, $P_{\textrm{guess}}^{M-1}\leq P_{\textrm{guess}}^{M}$ since $P_{\textrm{guess}}^{M-1}$ can be obtained from $P_{\textrm{guess}}^{M}$ by introducing an additional constraint $\pi_M=0$.
\end{proof}
The lemma implies that to achieve $P_{\textrm{guess}}$, we have to choose a sphere that contains the maximum number of states lying on. It is simply \emph{the minimal sphere covering all the $\boldsymbol{v}_i$}'s (the same result was obtained in \cite{BaePRA2013,BaeNJP2013} with a different approach). We refer to it as the \emph{circumsphere}, which is defined  by $\{R, \boldsymbol{O}\}$, where $R=|\boldsymbol{v}_i-\boldsymbol{O}|$ represents its radius, and $\boldsymbol{O}$ denotes its  circumcenter. Based on this, considering the fact that $0 < R \leq 1$ and $P_{\textrm{guess}}=\gamma_0$, we can deduce from Eq.~\eqref{eq:gamma0-1/N}
\begin{equation} \label{eq:defr}
\gamma_0=\frac{1}{N}\left( 1+R \right),
\end{equation}
and
\begin{equation} \label{eq:pguesslimits}
\frac{1}{N}<P_{\textrm{guess}}\leq\frac{2}{N}.
\end{equation}
Moreover, Eq.~\eqref{eq:ni} reduces to
\begin{equation} \label{eq:nieq}
\boldsymbol{\hat{n}_i}=\frac{\boldsymbol{v}_i-\boldsymbol{O}}{|\boldsymbol{v}_i-\boldsymbol{O}|}.
\end{equation}
Note that, \emph{as long as the circumsphere remains fixed, it is possible to modify the angles between $N$ qubit states while keeping the guessing probability unchanged.}\\
The following corollaries can be immediately derived from the above results.
\begin{corollary}\label{corollary:7}
We suppose that there are $N$ qubit states with equal priori probabilities, $\{\frac{1}{N}, \rho_i \}_{i=1}^N$, covering with the circumsphere $\{R, \boldsymbol{O}\}$. In this case, Eq.~\eqref{eq:sigmaalpha} holds with some known $\boldsymbol{\hat{n}_i}$'s from Eq.~\eqref{eq:nieq} (note that only the states lying on the circumsphere are defined by $\boldsymbol{\hat{n}_i}$). Then\\ (i) From the previous discussions, if we add $K$ different qubit states to the circumsphere, where all states have equal prior probabilities of $\frac{1}{N+K}$, the circumsphere remains invariant. (ii) It follows from Eq.~\eqref{eq:defpi} that all the optimal measurement operators for the original set $\{\frac{1}{N}, \rho_i \}_{i=1}^N$ are still optimal for the new set $\{\frac{1}{N+K}, \rho_i \}_{i=1}^{N+K}$ because the defined $\boldsymbol{\hat{n}_i}$'s from the original problem with $N$ states remain unchanged for the new problem, as given by Eq.~\eqref{eq:nieq}. However, this is not the only solution to the new problem. (iii) The guessing probability for the new problem can be expressed in terms of the guessing probability  $P_{guess}=\frac{1}{N}\left( 1+R \right)$ of the original one as $P_{\textrm{guess}}^{New}=\frac{N}{N+K} P_{\textrm{guess}}$, according to Eq.~\eqref{eq:defr}.
\end{corollary}

\begin{corollary}\label{corollary:8}
Consider the case of $N$ equiprobable qubit states with the same purity $|\boldsymbol{v}_i|$ that are located on the circumsphere $\{R, \boldsymbol{O}\}$. If any of the following statements is true, then the other statements will also be true. In other words, these statements can be used interchangeably in this particular case.
\begin{enumerate}[(i)]
\item
The circumcenter is at the origin, i.e. $\boldsymbol{O}=0$.
\item
The radius of the circumsphere is given by $R=|\boldsymbol{v}_i|$.
\item
The guessing probability is $P_{\textrm{guess}}=\frac{1}{N}\left( 1+|\boldsymbol{v}_i| \right)$.
\item The optimal measurement operators are given by Eq.~\eqref{eq:defpi} with $\boldsymbol{\hat{n}_i}=\frac{\boldsymbol{v}_i}{|\boldsymbol{v}_i|}$.
\end{enumerate}
\end{corollary}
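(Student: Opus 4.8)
The plan is to run the cycle of implications $(i)\Rightarrow(ii)\Rightarrow(iii)\Rightarrow(iv)\Rightarrow(i)$. Throughout I write $r:=|\boldsymbol{v}_i|$ for the common purity, and I use two standing facts: since by hypothesis all $N$ states lie on the circumsphere we have $|\boldsymbol{v}_i-\boldsymbol{O}|=R$ for every $i$; and since $\{R,\boldsymbol{O}\}$ is the circumsphere, an optimal POVM of the form~\eqref{eq:defpi} exists, so there are coefficients $\alpha_i\ge 0$ with $\sum_i\alpha_i=2$ and $\sum_i\alpha_i\hat{\boldsymbol{n}}_i=0$, where $\hat{\boldsymbol{n}}_i=(\boldsymbol{v}_i-\boldsymbol{O})/R$ by Eq.~\eqref{eq:nieq}. (The degenerate case $R=r=0$, in which all states coincide, is excluded by Eq.~\eqref{eq:pguesslimits} and can be set aside.)

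The two bookkeeping links are immediate. For $(i)\Rightarrow(ii)$: if $\boldsymbol{O}=0$ then $R=|\boldsymbol{v}_i-\boldsymbol{O}|=|\boldsymbol{v}_i|=r$. For $(ii)\Rightarrow(iii)$: Eq.~\eqref{eq:defr} gives $P_{\textrm{guess}}=\gamma_0=\frac{1}{N}(1+R)$, so $R=r$ turns this into $P_{\textrm{guess}}=\frac{1}{N}(1+|\boldsymbol{v}_i|)$. Note Eq.~\eqref{eq:defr} also reads $R=NP_{\textrm{guess}}-1$, so $(iii)$ forces $R=|\boldsymbol{v}_i|=r$ back again; thus $(ii)$ and $(iii)$ are trivially interchangeable, and all the real content lies in passing from ``$R=r$'' to ``$\boldsymbol{O}=0$''.

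That is the main step, and it is the part I expect to be the genuine obstacle: lying on a common sphere whose radius equals the purity does \emph{not} by itself pin the center to the origin. Concretely, expanding $|\boldsymbol{v}_i-\boldsymbol{O}|^2=R^2$ and using $|\boldsymbol{v}_i|^2=r^2=R^2$ gives only $\boldsymbol{v}_i\cdot\boldsymbol{O}=|\boldsymbol{O}|^2/2$ for every $i$, i.e. the data fix the projection of each $\boldsymbol{v}_i$ onto $\boldsymbol{O}$ but leave $\boldsymbol{O}$ free. The extra ingredient is the POVM normalization: from $\hat{\boldsymbol{n}}_i=(\boldsymbol{v}_i-\boldsymbol{O})/R$ together with $\sum_i\alpha_i\hat{\boldsymbol{n}}_i=0$ and $\sum_i\alpha_i=2$ one gets $\sum_i\alpha_i\boldsymbol{v}_i=2\boldsymbol{O}$; taking the inner product of this identity with $\boldsymbol{O}$ and substituting $\boldsymbol{v}_i\cdot\boldsymbol{O}=|\boldsymbol{O}|^2/2$ yields $|\boldsymbol{O}|^2=2|\boldsymbol{O}|^2$, hence $\boldsymbol{O}=0$. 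Geometrically this is just the statement that the circumcenter must lie in the convex polytope of the $\boldsymbol{v}_i$, which on a sphere through points all at distance $r$ from the origin forces the center to the origin. Substituting $\boldsymbol{O}=0$ into Eq.~\eqref{eq:nieq} gives $\hat{\boldsymbol{n}}_i=\boldsymbol{v}_i/|\boldsymbol{v}_i|$, which is $(iv)$; since $(iii)$ already supplies $R=r$, this is exactly the implication $(iii)\Rightarrow(iv)$.

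Finally $(iv)\Rightarrow(i)$ closes the cycle with a short repeat of the same idea: if $\hat{\boldsymbol{n}}_i=\boldsymbol{v}_i/|\boldsymbol{v}_i|$, then $\sum_i\alpha_i\hat{\boldsymbol{n}}_i=0$ reads $\sum_i\alpha_i\boldsymbol{v}_i=0$, and comparing with the identity $\sum_i\alpha_i\boldsymbol{v}_i=2\boldsymbol{O}$ derived above from Eq.~\eqref{eq:nieq} and the completeness relations~\eqref{eq:sigmaalpha} forces $\boldsymbol{O}=0$. This establishes the equivalence of all four statements.
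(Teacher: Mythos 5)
Your proof is correct and follows the route the paper itself intends: the paper gives no explicit argument (it simply states that all implications are trivial and can be inferred from the results given above), and your cycle of implications uses exactly those results, namely Eq.~\eqref{eq:defr}, Eq.~\eqref{eq:nieq} and the completeness relations~\eqref{eq:sigmaalpha}. The one step that is not purely bookkeeping --- deducing $\boldsymbol{O}=0$ from $R=|\boldsymbol{v}_i|$ via $\boldsymbol{v}_i\cdot\boldsymbol{O}=|\boldsymbol{O}|^2/2$ together with $\sum_i\alpha_i\boldsymbol{v}_i=2\boldsymbol{O}$ (equivalently, the circumcenter lying in the convex hull of the $\boldsymbol{v}_i$) --- is exactly the detail the paper glosses over, and you handle it correctly.
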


All implications are trivial and can be inferred from the results given above.
In particular, we consider the case of $N$ equiprobable pure states, $|\boldsymbol{v}_i|=1$. From the corollary, it follows that the guessing probability of $N$ equiprobable pure qubit states with $\boldsymbol{O}$ at the origin is an example that achieves its maximum value $P_{\textrm{guess}}=2/N$. This is because the radius of this circumsphere cannot be greater than one.

\end{document}